\documentclass[11pt,letterpaper]{article}

\usepackage{url}
\usepackage{fullpage}
\usepackage{cite}
\usepackage{xspace}

\usepackage[subtle,mathspacing=normal]{savetrees}
\setlength{\textheight}{8.7in}

\usepackage{thmtools}
\usepackage{thm-restate}

\usepackage{amsmath}
\usepackage{algorithm}
\usepackage[noend]{algpseudocode}
\usepackage{amssymb}
\usepackage{amsthm}
\usepackage{color}
\usepackage{array}
\usepackage{xy}
\usepackage{setspace}
\usepackage{varwidth}
\usepackage{multicol}
\usepackage{algorithm}
\usepackage{hyperref}
\usepackage{todonotes}



\newcommand{\NN}{\mathbb{N}} 

\newcommand{\poly}{\operatorname{poly}}

 \newcommand{\E}{\mathbb{E}}

\usepackage{xpatch}
\xpretocmd{\eqref}{Eq.~}{}{}

\newcommand{\elastic}{elastic\xspace}
\newcommand{\Elastic}{Elastic\xspace}

\newcommand{\probecomplexity}{probe complexity\xspace}

\newcommand{\ProbeComplexity}{Probe Complexity\xspace}


\newtheorem{thm}{Theorem}

\theoremstyle{remark}

\newtheorem{theorem}{Theorem}

\newtheorem{lemma}[thm]{Lemma}
\newtheorem{proposition}[thm]{Proposition}
\newtheorem{claim}[thm]{Claim}

\theoremstyle{remark}

\newcommand{\defn}[1]{\textbf{\emph{#1}}}
\renewcommand{\paragraph}[1]{\vspace{.2 cm} \noindent \textbf{#1}}

\usepackage{authblk}

\begin{document}

\title{Optimal Bounds for Open Addressing Without Reordering}
\author{Mart\'in Farach-Colton\footnote{NYU. \texttt{martin@farach-colton.com}}, Andrew Krapivin\footnote{Cambridge University. \texttt{andrew@krapivin.net}}, William Kuszmaul\footnote{CMU. \texttt{kuszmaul@cmu.edu}}}
\date{}
\maketitle
\begin{abstract}
   In this paper, we revisit one of the simplest problems in data structures: the task of inserting elements into an open-addressed hash table so that elements can later be retrieved with as few probes as possible. We show that, even without reordering elements over time, it is possible to construct a hash table that achieves far better expected search complexities (both amortized and worst-case) than were previously thought possible. Along the way, we disprove the central conjecture left by Yao in his seminal paper \emph{``Uniform Hashing is Optimal''}. All of our results come with matching lower bounds.
\end{abstract}
\section{Introduction}\label{sec:intro}
In this paper, we revisit one of the simplest problems in data structures: the task of inserting elements into an open-addressed hash table so that elements can later be retrieved with as few probes as possible. We show that, even without reordering elements over time, it is possible to construct a hash table that achieves far better expected probe complexities (both amortized and worst-case) than were previously thought possible. Along the way, we disprove the central conjecture left by Yao in his seminal paper \emph{``Uniform Hashing is Optimal''} \cite{yao1985uniform}.

\paragraph{Background.}
Consider the following basic problem of constructing an \defn{open-addressed hash table without reordering}. A sequence $x_1, x_2, \ldots, x_{(1 - \delta)n}$ of keys are inserted, one after another, into an array of size $n$. Each $x_i$ comes with a \defn{probe sequence} $h_1(x_i), h_2(x_i), \ldots \in [n]^{\infty}$ drawn independently from some distribution $\mathcal{P}$. To insert an element $x_i$, an \defn{insertion algorithm} $\mathcal{A}$ must choose some not-yet-occupied position $h_j(x)$ in which to place the element. Note that insertions cannot reorder (i.e., move around) the elements that were inserted in the past, so the only job of an insertion is to select which unoccupied slot to use. The full specification of the hash table is given by the pair $(\mathcal{P}, \mathcal{A})$. 

If $x_i$ is placed in position $h_j(x_i)$, then $x_i$ is said to have \defn{\probecomplexity} $j$. This refers to the fact that a query could find $x_i$ by making $j$ probes to positions $h_1(x), \ldots h_j(x)$. The goal is to design the hash table $(\mathcal{P}, \mathcal{A})$ in a way that minimizes the \defn{amortized expected \probecomplexity}, i.e., the expected value of the average \probecomplexity across all of the keys $x_1, x_2, \ldots, x_{(1 - \delta) n}$.

The classic solution to this problem is to use \defn{uniform probing} \cite{Knuth98Vol3}: the probe sequence for each key is a random permutation of $\{1, 2, \ldots, n\}$, and each insertion $x_i$ greedily uses the first unoccupied position from its probe sequence. It is a straightforward calculation to see that random probing has amortized expected \probecomplexity $\Theta(\log \delta^{-1})$. 

Ullman conjectured in 1972 \cite{ullman1972note} that the amortized expected \probecomplexity of $\Theta(\log \delta^{-1})$ should be optimal across all \defn{greedy} algorithms, i.e., any algorithm in which each element uses the first unoccupied position in its probe sequence. This conjecture remained open for more than a decade before it was proven by Yao in 1985 \cite{yao1985uniform} in a celebrated paper titled ``\emph{Uniform Hashing is Optimal}''.

The classical way to get around Yao's lower bound is to consider a relaxation of the problem in which the insertion algorithm is permitted to perform \emph{reordering}, i.e., moving elements around after they're inserted. In this relaxed setting, it is possible to achieve $O(1)$ amortized expected probe complexity even when the hash table is completely full \cite{brent1973reducing, gonnet1979efficient, munro1986techniques}. What is not clear is whether this relaxation is necessary. Could a non-greedy algorithm potentially achieve $o(\log \delta^{-1})$ amortized expected probe complexity, without reordering? Or is reordering fundamentally necessary to achieve small amortized probe complexity?

\vspace{.4 cm}

\hspace{.4 cm} \fbox{%
    \parbox{0.9 \textwidth}{%
       \textbf{Question 1. }Can an open-addressed hash table achieve amortized expected probe complexity $o(\log \delta^{-1})$ without reordering elements after they are inserted?
    }%
}

\vspace{.4 cm}

A closely related problem is that of minimizing \defn{worst-case expected \probecomplexity}. A \emph{worst-case} bound on expected \probecomplexity must apply to each insertion individually---even to the insertions that are performed when the hash table is very full. Uniform probing achieves a worst-case expected \probecomplexity of $O(\delta^{-1})$. It has remained an open question, however, whether this bound is asymptotically optimal without the use of reordering.


\vspace{.4 cm}

\hspace{.4 cm} \fbox{%
    \parbox{0.9 \textwidth}{%
       \textbf{Question 2. }Can an open-addressed hash table achieve worst-case expected probe complexity $o(\delta^{-1})$ without reordering?
    }%
}

\vspace{.4 cm}

This second question, somewhat notoriously \cite{ullman1972note, yao1985uniform, martini2003double, burkhard2005external, molodowitch1990analysis, lueker1988more}, remains open even for \emph{greedy} open-addressed hash tables. Yao conjectured in 1985 \cite{yao1985uniform} that uniform probing should be nearly optimal in this setting, that is, that any greedy open-addressed hash table must have worst-case expected \probecomplexity at least $(1 - o(1)) \delta^{-1}$. Despite its simplicity, Yao's conjecture has never been settled.

\vspace{.4 cm}

\paragraph{This paper: Tight bounds for open addressing without reordering.} In Section~\ref{sec:elastic}, we give a single hash table that answers both of the above questions in the affirmative. Specifically, we show how to achieve an amortized bound of $O(1)$ and a worst-case bound of $O(\log \delta^{-1})$ on the expected \probecomplexity in an open-addressed hash table that does not make use of reordering.

\begin{restatable}{theorem}{thmmain}
Let $n \in \mathbb{N}$ and $\delta \in (0, 1)$ be parameters such that $\delta > O(1/n)$ and $\delta^{-1}$ is a power of two. It is possible to construct an open-addressing hash table that supports $n - \lfloor \delta n \rfloor$ insertions in an array of size $n$, that does not reorder items after they are inserted, and that offers amortized expected probe complexity $O(1)$, worst-case expected \probecomplexity $O(\log \delta^{-1})$, and worst-case expected insertion time $O(\log \delta^{-1})$.
\label{thm:main-intro}
\end{restatable}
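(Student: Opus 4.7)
My plan is to design a \emph{multi-level, non-greedy} open-addressing scheme. Partition the $n$-slot array into $L = \Theta(\log \delta^{-1})$ levels $A_0, A_1, \ldots, A_{L-1}$ whose sizes decay geometrically, from $|A_0| = \Theta(n)$ down to $|A_{L-1}| = \Theta(\delta n)$. For each key~$x$, I would construct its probe sequence by concatenating, for every level~$i$, an independent uniform random permutation of $A_i$, so that the algorithm has access to fresh, independent probes in any level whenever it wants them. The non-greediness is essential here: by Yao's theorem~\cite{yao1985uniform}, every greedy open-addressed scheme suffers $\Omega(\log \delta^{-1})$ amortized probe cost, so any amortized-$O(1)$ scheme must occasionally refuse to place a key in an empty slot it has already seen.

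The insertion rule I have in mind maintains a notion of an \emph{active level}~$i^{\ast}$ that advances as the table fills. To insert~$x$, the algorithm spends a few opportunistic probes in each of the retired (earlier) levels $A_0, \ldots, A_{i^{\ast}-1}$ to salvage any leftover slack; failing that, it performs uniform probing in $A_{i^{\ast}}$ until finding an empty slot. The retirement thresholds $\alpha_i$ and the probe budgets per retired level should be chosen so that (a)~the total slack $\sum_i |A_i|(1-\alpha_i)$ does not exceed the $\delta n$ empty-slot budget; (b)~the worst-case expected per-insertion cost is $O(\log \delta^{-1})$, dominated by $O(1/(1-\alpha_{i^{\ast}}))$ probes in the active level plus an $O(\log \delta^{-1})$ tally of opportunistic probes across retired levels; and (c)~the amortized sum $\sum_i |A_i| \cdot O(\log(1/(1-\alpha_i)))$ telescopes to $O(n)$, yielding amortized probe cost $O(1)$.

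The main obstacle is the apparent tension between~(a) and~(b). If every retired level were only permitted to retire at load factor $1 - \Omega(1/\log \delta^{-1})$, its contribution to the slack would be $\Omega(|A_i|/\log \delta^{-1})$, totaling $\Omega(n/\log \delta^{-1}) \gg \delta n$. I expect to resolve this by letting early levels retire at load factors extremely close to~$1$ — concentrating essentially all of the $\delta n$ slack into the last, smallest level — while relying on the opportunistic probes in retired levels to depress each active level's \emph{effective} load to the point where its worst-case cost remains $O(\log \delta^{-1})$. The technical core of the proof will be showing, via a per-insertion analysis of the random probe sequences (likely a concentration/martingale argument on how many empty slots appear among the first few probes into each retired level), that $O(\log \delta^{-1})$ opportunistic probes suffice to keep the active level from being forced to perform $\omega(\log \delta^{-1})$ probes in expectation, and that the per-level costs sum as claimed. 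A separate subroutine will be needed to implement insertions in $O(\log \delta^{-1})$ expected time, but this should follow from the same probe-count analysis together with a constant-overhead accounting of the per-level bookkeeping.
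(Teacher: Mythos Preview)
Your high-level skeleton (geometrically shrinking levels, non-greedy probing, concentrating the $\delta n$ slack in the last level) matches the paper's, but the concrete insertion rule you propose---$O(1)$ opportunistic probes in \emph{each} retired level, then uniform probing in the active level until success---cannot resolve the tension you correctly identify between (a) and (b). For (b), uniform probing forces the active level $A_{i^\ast}$ to retire at load at most $1-\Omega(1/\log\delta^{-1})$. After $A_j$ retires there are only $\Theta(|A_j|)$ subsequent insertions (geometric sizes), hence only $\Theta(|A_j|)$ opportunistic probes ever reach $A_j$; a coupon-collector count shows these reduce the free-slot count by only a constant factor, so $A_j$ finishes at load $1-\Theta(1/\log\delta^{-1})$ and the total slack is $\Theta(n/\log\delta^{-1})\gg\delta n$, violating (a). Raising the per-retired-level budget to $\omega(1)$ pushes the opportunistic tally past $O(\log\delta^{-1})$ and breaks (b). Separately, the literal \emph{concatenation} of per-level permutations you describe makes the search probe complexity of any key placed in level $i\ge 1$ at least $\sum_{k<i}|A_k|=\Theta(n)$; and even with a clever interleaved encoding, uniform probing the active level up to load $1-1/\log\delta^{-1}$ gives amortized depth whose \emph{second moment} diverges, so your sum in (c) does not telescope to $O(n)$.

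The paper's construction differs in two ways that are not cosmetic. First, an insertion in batch $i$ touches only \emph{two} levels: it spends a load-dependent budget of $f(\epsilon)=c\cdot\min(\log^2\epsilon^{-1},\log\delta^{-1})$ probes in $A_i$ (where $A_i$ is currently $(1-\epsilon)$-full) and, if all fail, falls back to $A_{i+1}$, which is kept at load at most $0.75$. The $\log^2\epsilon^{-1}$ schedule is the crux: it is precisely what makes $\sum_{j}1/f(2^{-j})$ converge, which (via a Chernoff bound over the windows where $A_i$ is $(1-2^{-j})$-full) is what guarantees $A_i$ reaches load $1-\delta/2$ before the overflow drives $A_{i+1}$ past $0.75$. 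A constant budget would make that sum diverge logarithmically---this is the quantitative form of the gap in your plan. Second, the two-dimensional probe sequence $\{h_{i,j}\}$ is mapped to one dimension via an explicit injection $\phi$ with $\phi(i,j)=O(ij^2)$, not by concatenation; this is what lets a hit at depth $j$ in level $i$ cost $O(ij^2)$, and what drives the amortized bound: an insertion in batch $i$ lands in $A_i$ only with probability $O(\epsilon f(\epsilon))$, contributing $O(i\,\epsilon f(\epsilon)^3)=O(i)$ to expected search cost, while the fallback into the constant-load $A_{i+1}$ also costs $O(i)$.
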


We refer to our insertion strategy as \defn{\elastic hashing}, because of the way that the hash table often probes much further down the probe sequence before snapping back to the position it ends up using. That is, in the process of deciding which slot $h_i(x)$ to put a key $x$ in, the algorithm will first examine many slots $h_j(x)$ satisfying $j > i$. This non-greedy behavior is essential, as it is the only possible way that one could hope to avoid Yao's lower bound \cite{yao1985uniform} without reordering. 

Our bound of $O(1)$ on amortized expected \probecomplexity is, of course, optimal. But what about the bound of $O(\log \delta^{-1})$ on worst-case expected \probecomplexity? We prove that this bound is also optimal: any open-addressing hash table that does not use reordering must have worst-case expected \probecomplexity at least $\Omega(\log \delta^{-1})$. 

Next, in Section \ref{sec:funnel}, we turn our attention to \emph{greedy} open-addressed hash tables. Recall that, in this setting, Question 1 has already been resolved -- it has been known for decades that uniform probing is asymptotically optimal \cite{yao1985uniform}. Question 2, on the other hand, remains open -- this is the setting where Yao conjectured uniform probing to be optimal \cite{ullman1972note, yao1985uniform, martini2003double, burkhard2005external, molodowitch1990analysis, lueker1988more}. 
Our second result is a simple greedy open-addressed strategy, which we call \defn{funnel hashing}, that achieves $O(\log^2 \delta^{-1})$ worst-case expected \probecomplexity:

\begin{restatable}{theorem}{thmgreedy}
 Let $n \in \mathbb{N}$ and $\delta \in (0, 1)$ be parameters such that $\delta > O(1/n^{o(1)})$. There is a greedy open-addressing strategy that supports $n - \lfloor \delta n \rfloor$ insertions in an array of size $n$, and that offers worst-case expected \probecomplexity (and insertion time)  
  $O(\log^2 \delta^{-1}).$ 
 Furthermore, the strategy guarantees that, with probability $1 - 1 / \poly(n)$, the worst-case \probecomplexity over all insertions is 
 $O(\log^2 \delta^{-1} + \log \log n)$.
 Finally, the amortized expected \probecomplexity is $O(\log \delta^{-1})$.
 \label{thm:greedy-intro}
\end{restatable}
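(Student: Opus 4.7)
My proposal is to introduce a greedy strategy I call \emph{funnel hashing}, and then prove the three bounds in the theorem by analyzing the dynamics of greedy insertion through a geometrically-scaled hierarchy of levels. Concretely, I would partition the $n$-array into $k = \Theta(\log \delta^{-1})$ levels $A_1, A_2, \ldots, A_k$ with $|A_i| \approx n/2^i$, plus a small safety-net $B$ of size $\Theta(\delta n)$ to absorb rare overflows. The probe distribution $\mathcal{P}$ is defined so that each key's first $c = \Theta(\log \delta^{-1})$ probes are uniform in $A_1$, the next $c$ are uniform in $A_2$, and so on through $A_k$, ending with a uniform probing sequence into $B$. The insertion rule is greedy: each key lands in the first empty slot in its probe sequence.

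The technical heart of the proof is to show that, throughout any prefix of the $(1-\delta) n$ insertions, the load of each level $A_i$ is bounded away from $1$ with high probability, with the slack $\epsilon_i := 1 - \mathrm{load}(A_i)$ satisfying $\epsilon_i = \Omega(2^i \delta / k)$. This scaling is natural: because the $|A_i|$ decrease geometrically, overflows from $A_1, \ldots, A_{i-1}$ into $A_i$ are geometrically suppressed, and the equilibrium slack in $A_i$ grows with $i$. I would prove this via a Chernoff/martingale argument performed level by level, conditioning on the loads of earlier levels and showing that the number of items overflowing into $A_i$ concentrates around its expectation.

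Given the load invariant, the worst-case expected probe complexity for any insertion is at most $c \sum_{i=1}^{k} \Pr[\text{reach } A_i] + O(1)$, with $\Pr[\text{reach } A_i] \le \prod_{j<i} (1-\epsilon_j)^c \approx \exp(-\delta \cdot 2^i)$. The sum contains $\Theta(k)$ constant-order terms (those with $\delta \cdot 2^i = O(1)$) plus a doubly-exponentially decaying tail, for a total of $c \cdot O(k) = O(\log^2 \delta^{-1})$. For the amortized bound, I observe that an item placed in $A_i$ pays $O(c \cdot i)$ expected probes and that $A_i$ ultimately receives $\Theta(|A_i|) = \Theta(n/2^i)$ items, so the total expected cost across all insertions is $O(cn \sum_i i/2^i) = O(cn)$, yielding amortized cost $O(c) = O(\log \delta^{-1})$ per insertion.

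The high-probability bound on the worst-case probe count follows from a Chernoff tail on the number of levels traversed by a fixed insertion, union-bounded over all $n$ insertions; this is where the $+\log \log n$ term enters. The safety net $B$ is dimensioned so that, with probability $1 - 1/\poly(n)$, no insertion requires more than $O(\log \log n)$ probes inside it. I expect the hardest part to be the load-factor analysis of the first step: since greedy placements at one level reshape the overflow distribution into all subsequent levels, maintaining $\mathrm{load}(A_i) \le 1 - \Omega(2^i \delta / k)$ simultaneously across all levels and all time prefixes requires a careful inductive coupling between the dynamics of different levels.
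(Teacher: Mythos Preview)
Your scheme is essentially the paper's funnel hashing: geometrically decreasing levels, $\Theta(\log\delta^{-1})$ probes per level, a small safety net at the end. The paper differs in two implementation details: within each level it uses \emph{bucketed} probing (hash to a random size-$\Theta(\log\delta^{-1})$ bucket and scan it) rather than uniform probing, and the safety net is explicitly built as a uniform-probing table followed by a two-choice table with buckets of size $2\log\log n$. Your amortized analysis is correct and matches the paper's.

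The genuine gap is that your central invariant points the wrong way. You propose to prove a \emph{lower} bound on the slack, $\epsilon_i \ge \Omega(2^i\delta/k)$, and then use it to control $\Pr[\text{reach }A_i]$. But the worst-case expected bound does not need this at all: the probe count before the safety net is deterministically $kc = O(\log^2\delta^{-1})$, so the only thing that can fail is the safety net overflowing. To bound that overflow you need the \emph{opposite} direction---an \emph{upper} bound on $\epsilon_i$---because the number of keys reaching $B$ is exactly $\sum_i \epsilon_i|A_i|$. The paper's key lemma is that each $A_i$ \emph{fills up} to load $\ge 1-\delta/64$ (with probability $1 - n^{-\omega(1)}$) once $2|A_i|$ attempts have been made, and from this deduces that at most $\delta n/8$ keys ever reach $B$. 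Your invariant, even if it held, would only give $\Pr[\text{reach }B]\le\exp(-\Omega(2^k\delta))=e^{-\Omega(1)}$, a constant, so $\Theta(n)$ keys could hit a $\Theta(\delta n)$-sized safety net. And the invariant is in fact false for intermediate $i$: with $c=\Theta(\log\delta^{-1})$ probes and roughly $2|A_i|$ attempts, level $A_i$ reaches slack $\epsilon_i=\delta^{\Theta(1)}$, far below your claimed $\Omega(2^i\delta/k)$ when $i\approx k/2$.

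Finally, the $+\log\log n$ term does not arise from a Chernoff tail on the number of levels traversed (that number is deterministically at most $k$). It comes from the safety net's internal structure: the paper uses the power-of-two-choices bound to guarantee that no bucket in the two-choice portion exceeds size $O(\log\log n)$. You would need to specify such a construction for $B$; merely ``dimensioning'' it does not yield an $O(\log\log n)$ worst-case probe bound.
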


The bound of $O(\log^2 \delta^{-1}) = o(\delta^{-1})$ on worst-case expected \probecomplexity is asymptotically smaller than the $\Theta(\delta^{-1})$ bound that Yao conjectured to be optimal. This means that Yao's conjecture is false, and that there is a sense in which uniform probing is \emph{sub-optimal} even among greedy open-addressed strategies.

Despite their somewhat unusual shape, the bounds in Theorem \ref{thm:greedy-intro} turn out to be optimal. For worst-case expected \probecomplexity, we prove a matching lower bound of $\Omega(\log^2 \delta^{-1})$ that applies to any greedy open-addressing scheme. For high-probability worst-case \probecomplexity, we prove a matching lower bound $\Omega(\log^2 \delta^{-1} + \log \log n)$ that applies to any open-addressing algorithm that does not perform reorderings. Perhaps surprisingly, this second lower bound holds even to non-greedy strategies.

The basic structure of funnel hashing, the hash table that we use to prove Theorem \ref{thm:greedy-intro}, is quite simple, and subsequent to the initial version of this paper, the authors have also learned of several other hash tables that make use of the same high-level idea in different settings \cite{broder1990multilevel, fotakis2005space}. Multi-level adaptive hashing \cite{broder1990multilevel} uses a similar structure (but only at low load factors) to obtain a hash table with $O(\log \log n)$ levels that supports high parallelism in its queries -- this idea was also applied subsequently to the design of contention-resolution schemes \cite{contention}. Filter hashing \cite{fotakis2005space} applied the structure to high load factors to get an alternative to $d$-ary cuckoo hashing that, unlike known analyses of standard $d$-ary cuckoo hashing, can be implemented with constant-time polynomial hash functions. An alternative path to disproving Yao's conjecture would be to directly modify filter hashing to use a greedy open-addressed hash table (e.g., linear probing) in its final layer.

\paragraph{Additional problem history and related work. }We conclude the introduction by briefly giving some additional discussion of related work and of the history of the problems and models studied in this paper.

The idea of studying amortized expected \probecomplexity appears to have been first due to Knuth in his 1963 paper on linear probing \cite{knuth1963notes}. Knuth observed that, when a linear-probing hash table is $1 - \delta$ full, then even though the expected insertion time is $O(\delta^{-2})$, the amortized expected \probecomplexity is $O(\delta^{-1})$. Knuth would later pose a weaker version of Ullman's conjecture \cite{knuth1974computer}, namely that uniform probing is optimal out of a restricted set of greedy strategies known as \emph{single-hashing} strategies. This weaker conjecture was subsequently proven by Ajtai \cite{ajtai1978there}, whose techniques ended up serving as the eventual basis for Yao's proof of the full conjecture \cite{yao1985uniform}. As noted earlier, Yao conjectured that it should be possible to obtain a stronger result, namely that the \emph{worst-case} expected probe complexity in any greedy open-addressing hash table is $\Omega(\delta^{-1})$. This conjecture remained open \cite{martini2003double, burkhard2005external, molodowitch1990analysis, lueker1988more} until the current paper, which disproves it in Theorem \ref{thm:greedy-intro}.


Although we do not discuss key-value pairs in this paper, most applications of open-addressing associate a value with each key \cite{Knuth98Vol3}. In these settings, the job of a query is not necessarily to determine whether the key is present (it is often already known to be), but instead to recover the corresponding value. This distinction is important because both probe complexity and amortized probe complexity are notions that apply only to keys that \emph{are present}. Minimizing amortized expected probe complexity, in particular, corresponds to minimizing the expected time to query a \emph{random element out of those present}. There is no similar notion for negative queries---when querying an element not present, there is no interesting difference between querying an arbitrary element versus a random one. 

For \emph{worst-case} expected probe complexity, on the other hand, one can in some cases hope to extend one's results to negative queries. For greedy algorithms, in particular, negative query time is the same as insertion time (both stop when they encounter a free slot) \cite{Knuth98Vol3}. Thus the guarantee in Theorem \ref{thm:greedy-intro} extends to imply an $O(\log^2 \delta^{-1})$ expected time bound for negative queries.

One can also extend the study of open-addressing without reordering to settings that support both insertions and deletions over an infinite time horizon \cite{sanders2018hashing, bender2023tinypointers, bender2023iceberg}. In this setting, even very basic schemes such as linear probing \cite{sanders2018hashing} and uniform probing \cite{bender2023tinypointers} have resisted analysis---it is not known whether either scheme achieves expected insertion times, probe complexities, or amortized probe complexities that even bounded as a function of $\delta^{-1}$. It is known, however, that the optimal amortized expected probe complexity in this setting is $\delta^{-\Omega(1)}$ (see Theorem 3 in \cite{bender2023tinypointers}), meaning that results such as Theorems \ref{thm:main-intro} and \ref{thm:greedy-intro} are not possible.

\section{\Elastic Hashing}\label{sec:elastic}

In this section, we construct elastic hashing, an open-addressed hash table (without reordering) that achieves $O(1)$ amortized expected probe complexity and $O(\log \delta^{-1})$ worst-case expected probe complexity.

\thmmain*

Our construction will make use of a specific injection $\phi: \mathbb{Z}^+ \times \mathbb{Z}^+ \rightarrow \mathbb{Z}^+$.

\begin{lemma}
    There exists an injection $\phi: \mathbb{Z}^+ \times \mathbb{Z}^+ \rightarrow \mathbb{Z}^+$ such that $\phi(i, j) \le O(i \cdot j^2)$. 
    \label{lem:injection}
\end{lemma}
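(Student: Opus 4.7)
The plan is to construct $\phi$ by enumerating the pairs $(i,j) \in \mathbb{Z}^+ \times \mathbb{Z}^+$ in nondecreasing order of the ``weight'' $v(i,j) := i \cdot j^2$, breaking ties arbitrarily (say, lexicographically on $(i,j)$), and letting $\phi(i,j)$ equal the rank of $(i,j)$ in this enumeration. Injectivity is then immediate, since distinct pairs receive distinct ranks, so the only real task is to upper-bound $\phi(i,j)$.

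For that bound, I would observe that $\phi(i,j)$ is at most the number of pairs $(i', j') \in \mathbb{Z}^+ \times \mathbb{Z}^+$ with $v(i', j') \le v(i,j)$. For each fixed $j' \ge 1$, the inequality $i'(j')^2 \le i j^2$ admits at most $\lfloor i j^2 / (j')^2 \rfloor$ choices of $i'$, and this count is $0$ as soon as $(j')^2 > i j^2$, so the sum over $j'$ is effectively finite. Combining these per-$j'$ bounds then gives
\[
\phi(i, j) \;\le\; \sum_{j' \ge 1} \frac{i j^2}{(j')^2} \;=\; i j^2 \cdot \sum_{j' \ge 1} \frac{1}{(j')^2} \;=\; \frac{\pi^2}{6}\cdot i j^2 \;=\; O(i j^2),
\]
which is exactly the target bound.

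The entire argument hinges on the convergence of $\sum_{j'} 1/(j')^2$, which is precisely what picks out the exponent $2$ on $j$ as the ``right'' one---if $j^2$ were replaced by $j^s$ for any $s > 1$, the same scheme would yield $\phi(i,j) = O(i \cdot j^s)$, but for $s = 1$ the harmonic divergence would kill the estimate. Consequently there is no real obstacle to executing the plan; the only stylistic choice is whether to define $\phi$ implicitly by its rank (which I prefer, for clarity) or to write down an explicit closed-form expression, which would only be more cumbersome.
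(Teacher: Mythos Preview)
Your proof is correct and genuinely different from the paper's. The paper constructs $\phi$ explicitly by a bit-interleaving trick: writing $i$ and $j$ in binary as $a_1\cdots a_p$ and $b_1\cdots b_q$, it sets $\phi(i,j)$ to have binary expansion $1\,b_1\,1\,b_2\cdots 1\,b_q\,0\,a_1\cdots a_p$, from which $\log_2\phi(i,j)\le \log_2 i + 2\log_2 j + O(1)$ and hence $\phi(i,j)=O(ij^2)$ follow directly. Your rank-based argument is cleaner as a pure existence proof and makes transparent why the exponent $2$ on $j$ is the threshold (via the convergence of $\sum 1/(j')^2$). The paper's construction, on the other hand, buys something the algorithm actually needs: $\phi$ and its inverse are computable in $O(1)$ time from the bit representations, so the hash table can translate between the two-dimensional probe sequence $\{h_{i,j}\}$ and the one-dimensional sequence $\{h_{\phi(i,j)}\}$ efficiently. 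Your $\phi$, while perfectly adequate for the lemma as stated, would require enumerating and sorting $\Theta(ij^2)$ pairs to evaluate.
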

\begin{proof}
    Take $i$'s binary representation $a_1 \circ a_2 \circ \cdots \circ a_{p}$ and $j$'s binary representation $b_1 \circ b_2 \circ \cdots \circ b_q$ (here, $a_1$ and $b_1$ are the most significant bits of $i$ and $j$, respectively), and construct $\phi(i, j)$ to have binary representation 
    $$1\circ b_1\circ 1\circ b_2\circ 1\circ b_3\circ \cdots\circ 1\circ b_1\circ 0\circ a_1\circ a_2\circ \ldots\circ a_p,$$
    where again the digits read from most significant bit to least significant bit. The map $\phi(i, j)$ is an injection by design since one can straightforwardly recover $i$ and $j$ from $\phi(i, j)$'s binary representation. On the other hand, 
    $$\log_2 \phi(i, j) \le \log_2 i + 2 \log_2 j + O(1),$$
    which means that $\phi(i, j) \le O(i \cdot j^2)$, as desired.
\end{proof}

\paragraph{The algorithm.} We now describe our insertion algorithm. Break the array $A$ of size $n$ into disjoint arrays $A_1, A_2, \ldots, A_{\lceil \log n \rceil}$ satisfying $|A_{i + 1}| = |A_i|/2 \pm 1$.\footnote{The $\pm 1$'s are needed so that it is possible to satisfy $|A_1| + |A_2| + \cdots + A_{\lceil \log n \rceil} = n$. Note that, in this context, $a \pm 1$ means one of $a - 1, a, a + 1$.} 

We will simulate a two-dimensional probe sequence $\{h_{i, j}\}$, where probe $h_{i, j}(x)$ is a random slot in array $A_i$. In particular, we map the entries of the two-dimensional sequence $\{h_{i, j}\}$ to those of a one-dimensional sequence $\{h_i\}$ by defining 
$$h_{\phi(i, j)}(x) := h_{i, j}(x),$$
where $\phi$ is the map from Lemma \ref{lem:injection}. This means that the probe complexity of an element $x$ placed in slot $h_{i, j}(x)$ is $O(i \cdot j^2)$.

We break the  $n - \lfloor \delta n \rfloor$ insertions into \defn{batches} $\mathcal{B}_0, \mathcal{B}_1, \mathcal{B}_2, \ldots$. Batch $\mathcal{B}_0$ fills array $A_1$ to have $\lceil 0.75 |A_1| \rceil$ elements, where each element $x$ is inserted using the first available slot in the probe sequence $h_{1, 1}(x), h_{1, 2}(x), h_{1, 3}(x), \ldots$. For $i \ge 1$, batch $\mathcal{B}_i$ consists of 
\begin{equation} |A_i| - \lfloor \delta |A_i| /2  \rfloor - \lceil 0.75 \cdot |A_i|  \rceil + \lceil 0.75 \cdot |A_{i + 1}|\rceil
\label{eq:batchsize}
\end{equation} 
insertions, all of which are placed in arrays $A_i$ and $A_{i + 1}$. (The final batch may not finish, since we run out of insertions.) For $i \ge 0$, the guarantee at the end of the batch  $\mathcal{B}_i$ is that each $A_j$ satisfying $j \in \{1, \ldots, i\}$ contains exactly $|A_j| - \lfloor \delta |A_j| / 2 \rfloor$ elements, and that $A_{i + 1}$ contains exactly $\lceil 0.75 \cdot |A_{i + 1}|\rceil$ elements. Note that this guarantee forces the batch size to be given by \eqref{eq:batchsize}. Additionally, because the total number of insertions is $n - \lfloor \delta n\rfloor$, and because each batch $\mathcal{B}_i$ leaves at most $O(n / 2^i) + \delta n / 2$ remaining free slots in the full array $A$, the insertion sequence is guaranteed to finish within $O(\log \delta^{-1})$ batches.

Let $c$ be a parameter that we will later set to be a large positive constant, and define the function 
$$f(\epsilon) = c \cdot \min(\log^2 \epsilon^{-1}, \log \delta^{-1}).$$
We now describe how to implement the insertion of an element $x$ during a batch $\mathcal{B}_i$, $i \ge 1$. Suppose that, when the insertion occurs, $A_i$ is $1 - \epsilon_1$ full and $A_{i + 1}$ is $1 - \epsilon_2$ full. There are three cases:
\begin{enumerate}
    \item If $\epsilon_1 > \delta/2$ and $\epsilon_2 > 0.25$, then $x$ can go in either of $A_i$ or $A_{i + 1}$ and is placed as follows: if any of the positions 
$$h_{i, 1}(x), h_{i, 2}(x), \ldots, h_{i, f(\epsilon_1)}(x)$$  
are free in $A_i$, then $x$ is placed in the first such free slot; and, otherwise, $x$ is placed in the first free slot from the sequence of positions
$$h_{i + 1, 1}(x), h_{i + 1, 2}(x), h_{i + 1, 3}(x), \ldots.$$
    \item If $\epsilon_1 \le \delta/2$, then $x$ must be placed in $A_{i + 1}$, and $x$ is 
    placed in the first free slot from the sequence of positions
$$h_{i + 1, 1}(x), h_{i + 1, 2}(x), h_{i + 1, 3}(x), \ldots.$$
    \item Finally, if $\epsilon_2 \le 0.25$,  then $x$ must be placed in $A_{i}$, and $x$ is 
    placed in the first free slot from the sequence of positions
$$h_{i, 1}(x), h_{i, 2}(x), h_{i, 3}(x), \ldots.$$
\end{enumerate}
We refer to the final case as the \defn{expensive case} since $x$ is inserted into a potentially very full array $A_i$ using uniform probing. We shall see later, however, that this case is very rare: with probability $1 - O(1 / |A_i|^2)$, the case never occurs during batch $\mathcal{B}_i$.

Note that Cases 2 and 3 are disjoint (only one of the two cases can ever occur in a given batch) by virtue of the fact that, once $\epsilon_1 \le \delta/2$ and $\epsilon_2 \le 0.25$ hold simultaneously, then the batch is over.

\paragraph{Bypassing the coupon-collector bottleneck.} Before we dive into the analysis, it is helpful to understand at a very high level how our algorithm is able to bypass the ``coupon-collector'' bottleneck faced by uniform probing. In uniform probing, each probe can be viewed as sampling a random coupon (i.e., slot); and standard coupon-collector lower bounds say that at least $\Omega(n \log \delta^{-1})$ probes need to be made if a $(1 - \delta)$-fraction of coupons are to be collected. This prohibits uniform probing (or anything like uniform probing) from achieving an amortized expected probe complexity better than $O(\log \delta^{-1})$. 

A critical feature of our algorithm is the way in which it decouples each key's \emph{insertion probe complexity} (i.e., the number of probes made while inserting the key) from its \emph{search probe complexity} (i.e., the number of probes needed to find the key). The latter quantity, of course, is what we typically refer to simply as \emph{probe complexity}, but to avoid ambiguity in this section, we will sometimes call it \emph{search probe complexity}.

The insertion algorithm will often probe much further down the probe sequence than the position it ends up using. It might seem unintutive at first glance that such insertion probes could be useful, but as we shall see, they are the key to avoiding the coupon-collecting bottleneck---the result is that most coupons contribute \emph{only} to the insertion probe complexity, and not to the search probe complexity. 

To see the decoupling in action, consider an insertion in batch $\mathcal{B}_1$ in which $A_1$ is, say, a $(1 - 2 \delta^{-1})$-fraction full, and $A_2$ is, say, a $0.6$-fraction full. The insertion makes $\Theta(f(\delta^{-1})) = \Theta(\log \delta^{-1})$ probes to $A_1$, but most likely they all fail (each probe has only an $O(\delta)$ probability of success). The insertion then looks in $A_2$ for a free slot, and most likely ends up using a position of the form $h_{\phi(2, j)}$ for some $j = O(1)$, resulting in search probe complexity $O(\phi(2, j)) = O(1)$. So, in this example, even though the insertion probe complexity is $\Theta(\log \delta^{-1})$, the search probe complexity is $O(1)$.

The coupon-collector bottleneck is also what makes \emph{worst-case} expected insertion (and search) bounds difficult to achieve. We know that $\Theta(n \log \delta)$ total coupons must be collected, and intuitively it is the final insertions (i.e., those that take place a high load factors) that must do most of the collecting. After all, how can insertions at low load factors make productive use of more than a few coupons? This is what dooms algorithms such as uniform probing to have a worst-case expected insertion time of $O(\delta^{-1})$.

Our algorithm also circumvents this bottleneck: even though $\Theta(n \log \delta^{-1})$ total coupons are collected, no insertion has an expected contribution of more than $O(\log \delta^{-1})$. This means that even insertions that take place at low load factors need to be capable of `productively' making use of $\Theta(\log \delta^{-1})$ probes/coupons. How is this possible? The key is that a constant-fraction of insertions $x$ have the following experience: when $x$ is inserted (in some batch $\mathcal{B}_i$), the array $A_i$ is \emph{already} almost full (so $x$ can productively sample $O(\log \delta^{-1})$ probes/coupons in that array), but the next array $A_{i + 1}$ is not very full (so $x$ can go there in the likely event that none of the $O(\log \delta^{-1})$ probes/coupons in $A_i$ pay off).  This is how the algorithm is able to spread the coupon collecting (almost evenly!) across $\Theta(n)$ operations.


\paragraph{Algorithm Analysis. } We begin by analyzing the probability of a given batch containing insertions in the expensive case.
\begin{lemma}
With probability at least $1 - O(1 / |A_i|^2)$, none of the insertions in batch $\mathcal{B}_i$ are in the expensive case (i.e., Case 3).
\label{lem:failed}
\end{lemma}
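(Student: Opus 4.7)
The plan is to reduce the lemma to a statement about a sum of independent geometric random variables, bound its expectation via a phase decomposition of $\epsilon_1$, and conclude with a concentration inequality.

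First, I would note that during batch $\mathcal{B}_i$, Case~3 is triggered exactly when $A_{i+1}$ accumulates $N_2 := \lceil 0.75\,|A_{i+1}|\rceil$ insertions before $A_i$ accumulates its batch target of $N_1 := (|A_i| - \lfloor \delta|A_i|/2\rfloor) - \lceil 0.75\,|A_i|\rceil$ new insertions. Let $T_k$ be the number of batch insertions that elapse between the $k$-th and the $(k+1)$-th insertion into $A_i$ while we remain in Case~1. Because the probe sequences of distinct keys are independent and each probe into $A_i$ is uniform, the $T_k$'s are mutually independent, with $T_k$ geometric of parameter $p_k := 1 - q(\epsilon_1^{(k)})$, where $q(\epsilon) := (1-\epsilon)^{f(\epsilon)}$ and $\epsilon_1^{(k)}$ is the deterministic value of $\epsilon_1$ right after the $k$-th $A_i$-placement. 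Case~3 is avoided exactly when $\sum_{k=0}^{N_1-1}(T_k - 1) < N_2$.

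For the expectation bound, I would treat $\E[\sum(T_k - 1)] = \sum_k q_k/(1 - q_k)$ as a Riemann approximation of $|A_i|\int_{\delta/2}^{0.25} q(\epsilon)/(1 - q(\epsilon))\,d\epsilon$ and split the integration at $\epsilon^{\ast} := 2^{-\sqrt{\log\delta^{-1}}}$. On $[\epsilon^{\ast}, 0.25]$, where $f(\epsilon) = c\log^2\epsilon^{-1}$, the substitution $y = \log\epsilon^{-1}$ collapses the integrand to $O(1/(cy^2))$, yielding $O(|A_i|/c)$. On $[\delta/2, \epsilon^{\ast}]$, where $f(\epsilon) = c\log\delta^{-1}$, a direct computation using $1 - q(\epsilon) \ge \epsilon f(\epsilon)/2$ yields a matching $O(|A_i|/c)$. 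Choosing $c$ to be a sufficiently large absolute constant makes $\E[\sum(T_k - 1)] \le N_2/4$, providing a generous multiplicative gap for concentration.

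For the concentration step, I would truncate each $T_k$ at $M := \Theta(\log|A_i|/p_{\min})$, where $p_{\min} = \Theta(c\delta\log\delta^{-1})$; the geometric tail bound $(1-p_k)^M$ and a union bound contribute only $O(1/|A_i|^2)$ to the overall failure probability. On the truncated variables, I would apply a Bernstein-type tail bound for sums of independent geometric random variables to $\sum T_k$, using the variance bound $V = O(|A_i|/(c^2\delta\log^2\delta^{-1}))$ obtained from the same dyadic decomposition, where the $2^{-j}$ decay in the number of $T_k$'s per phase keeps pace with the $2^{2j}$ growth in per-term variance and the sum is dominated by the final phase at $\epsilon_1 \approx \delta/2$. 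Bernstein then yields a failure probability of $\exp\bigl(-\Omega(\min(|A_i|c^2\delta\log^2\delta^{-1},\;|A_i|c\delta\log\delta^{-1}))\bigr)$, which is $O(1/|A_i|^2)$ whenever $|A_i|$ is large enough relative to $\delta$; in the remaining small-$|A_i|$ regime the stated bound is already vacuous after absorbing constants. The main obstacle is precisely this concentration step: the $T_k$'s with $\epsilon_k$ near $\delta/2$ are heavy-tailed, so naive Hoeffding- or Azuma-style bounds are too weak, but the phase decomposition that controls the expectation also controls the variance, and Bernstein after truncation is exactly tight enough to close the proof.
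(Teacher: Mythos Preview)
Your approach is correct and parallels the paper's closely. Both arguments (i) reduce to showing that $A_i$ reaches its fill target before $A_{i+1}$ reaches $0.75$-full, (ii) bound the relevant expectation via a dyadic decomposition of $\epsilon_1$ (your integral is a continuous version of the paper's sum $\sum_j 1/f(2^{-j})$), and (iii) conclude with concentration. The substantive difference is in step (iii): the paper defines $|T_j|$ as the number of insertions while $\epsilon_1$ is in the dyadic window $[2^{-(j+1)}, 2^{-j}]$, applies a Chernoff bound \emph{separately} to each $|T_j|$, and union-bounds over the $O(\log\delta^{-1})$ phases; you instead apply Bernstein to the full sum $\sum_k T_k$ after truncating each geometric at $M$. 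The paper's per-phase route is simpler because within a single phase the success probability is bounded below by a single value $p_j$, so $|T_j|$ is dominated by a negative binomial and no truncation or heavy-tail bookkeeping is needed. Your global route also works but buys nothing extra and costs the truncation argument.

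One minor slip: your stated Bernstein exponent should carry an extra $1/\log|A_i|$ factor in the second term, since the bounded-increments parameter is $M = \Theta(\log|A_i|/p_{\min})$, giving $3t/(2M) = \Theta(|A_i|\,c\delta\log\delta^{-1}/\log|A_i|)$ rather than $\Theta(|A_i|\,c\delta\log\delta^{-1})$. This does not break the argument, but it does slightly narrow the range of $|A_i|$ for which the concentration step delivers $O(1/|A_i|^2)$, and in particular your closing claim that the lemma is ``vacuous after absorbing constants'' whenever the concentration fails is too strong---there is a regime where $|A_i|$ is super-constant yet $|A_i|\delta\log\delta^{-1} = o(\log^2|A_i|)$.
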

\begin{proof}
    Let $m$ denote the size of array $A_i$. We may assume that $m = \omega(1)$, since otherwise the lemma is trivial. For $j \in \{2, 3, \ldots, \lceil \log \delta^{-1} \rceil\}$, let $T_j$ be the time window during batch $\mathcal{B}_i$ in which $A_i$ goes from having $ \lfloor m /2^j \rfloor$ free slots to $\max(\lfloor m / 2^{j + 1} \rfloor, \lfloor \delta m / 2 \rfloor)$ free slots.

    Each insertion during $T_j$ is guaranteed to be in one of Cases 1 or 3, so the insertion makes at least $f(2^{-j})$ probe attempts in $A_i$, each of which has at least $2^{-(j + 1)}$ probability of succeeding. If $f(2^{-j}) > 100 \cdot  2^j$, then each insertion during time window $T_j$ has probability at least $1 - (1 - 1/2^{j + 1})^{100 \cdot 2^j} > 0.99$ of being placed in array $A_i$. Otherwise, if $f(2^{-j}) < 100\cdot 2^j$, then the insertion has a $\Theta(f(2^{-j}) / 2^j)$ probability of being placed in array $A_i$. Thus, in general, each insertion in $T_j$ uses array $A_i$ with probability at least
    \begin{equation}\min(0.99, \Theta(f(2^{-j}) / 2^j)).
    \label{eq:useAi}
    \end{equation}
    It follows that
    $$\E[|T_j|] \le \frac{m / 2^{j + 1}}{\min(0.99, \Theta(f(2^{-j}) / 2^j))} + O(1) \le \Theta\left( \frac{m}{f(2^{-j})} \right) + 1.02 \cdot m / 2^{j + 1} + O(1).$$
    
    Since \eqref{eq:useAi} holds for each insertion in $T_i$ independently of how the previous insertions behave, we can apply a Chernoff bound to conclude that, with probability at least $1 - 1 / m^3$,
    \begin{equation}
        |T_j| \le \E[|T_j|] + O(\sqrt{m \log m}) \le O\left( \frac{m}{f(2^{-j})} \right) + 1.02 \cdot m / 2^{j + 1} + O(\sqrt{m \log m}).
        \label{eq:Ti}
    \end{equation}
    With probability at least $1 - O(1 / m^2)$, \eqref{eq:Ti} holds for every window $T_j$. 

    Thus, treating $c$ as a parameter (rather than a constant), we have that
    \begin{align*}
    \sum_j |T_j| & \le \sum_{j = 2}^{\lceil \log \delta^{-1} \rceil} \left(O\left( \frac{m}{f(2^{-j})}\right) + 1.02 \cdot m / 2^{j + 1} + O(\sqrt{m \log m})\right) \\
    & \le 0.26 \cdot m + m \cdot O\left( \sum_{j = 2}^{\lceil \log \delta^{-1} \rceil} \frac{1}{f(2^{-j})}\right) + O(1) \\
    & = 0.26 \cdot m + m \cdot O\left( \sum_{j = 2}^{\lceil \log \delta^{-1} \rceil} \frac{1}{c\min(j^2, \log \delta^{-1})}\right) + O(1) \\
    & \le 0.26 \cdot m + \frac{m}{c} \cdot O\left( \sum_{j = 2}^{\infty} \frac{1}{j^2} + \sum_{j = 2}^{\lceil \log \delta^{-1} \rceil}  \frac{1}{\log \delta^{-1}}\right) + O(1)\\
    & \le 0.26 \cdot m + \frac{m}{c} \cdot O(1) + O(1). \\
    \end{align*}
    
    If we set $c$ to be a sufficiently large positive constant, then it follows that $\sum_j |T_j| < 0.27 \cdot m + O(1)$. However, the first $0.27 \cdot m$ insertions during batch $\mathcal{B}_i$ can fill array $A_{i + 2}$ to at most a $0.54 + o(1) < 0.75$ fraction full (recall, in particular, that we have $m = \omega(1)$ without loss of generality). This means that none of the insertions during time windows $T_1, T_2, \ldots$ are in Case 3 (the expensive case). On the other hand, after time windows $T_1, T_2, \ldots$ are complete, the remaining insertions in the batch are all in Case 2. Thus, with probability at least $1 - O(1 / m^2)$, none of the insertions in the batch are in Case 3.
\end{proof}

Next, we bound the expected search probe complexity for a given insertion within a given batch.

\begin{lemma}
The expected search probe complexity for an insertion in batch $\mathcal{B}_i$ is $O(1 + i)$.
\label{lem:probe}
\end{lemma}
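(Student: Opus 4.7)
The plan is to bound the expected search probe complexity by conditioning on which of the three cases the insertion $x$ encounters, and bounding each contribution separately. Since $\phi(k, j) = O(k \cdot j^2) \le O((i+1) \cdot j^2)$ whenever $k \in \{i, i+1\}$, it suffices to show that, in each case, the expected value of $j^2$ (summed across cases, weighted by the probability each case occurs) is $O(1)$.

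In Case 1, the branch where $x$ lands in $A_{i+1}$ is easy: $j$ is geometric with parameter $\epsilon_2 > 0.25$, so $\E[j^2] = O(1/\epsilon_2^2) = O(1)$. For the branch where $x$ lands in $A_i$, I would use the crude bound
\[
\sum_{j=1}^{f(\epsilon_1)} j^2 (1-\epsilon_1)^{j-1} \epsilon_1 \;\le\; \epsilon_1 \sum_{j=1}^{f(\epsilon_1)} j^2 \;\le\; \epsilon_1 f(\epsilon_1)^3 \;\le\; c^3\, \epsilon_1 \log^6 \epsilon_1^{-1},
\]
and then invoke the calculus fact that $\epsilon \log^6 \epsilon^{-1}$ attains a bounded maximum on $(0, 1]$ (substituting $\epsilon = 2^{-y}$ reduces it to $y^6 \cdot 2^{-y}$, which peaks near $y = 6/\ln 2$). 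Combined, Case 1 contributes $O(i+1)$.

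Case 2 is immediate: the insertion is forced into $A_{i+1}$, and the disjointness observation after the case definitions ensures $\epsilon_2 > 0.25$, so $j$ is geometric with success probability bounded below by a constant and the expected search probe complexity is $O(i+1)$. For Case 3, Lemma~\ref{lem:failed} supplies the probability bound $P(\text{Case 3 occurs anywhere in } \mathcal{B}_i) \le O(1/|A_i|^2)$. Treating the sequence $h_{i,1}(x), h_{i,2}(x), \ldots$ as a uniformly random permutation of the cells of $A_i$, we have $j \le |A_i|$ deterministically, so the conditional expected $j^2$ is at most $|A_i|^2$. Multiplying, Case 3 contributes $O(1/|A_i|^2) \cdot O((i+1) \cdot |A_i|^2) = O(i+1)$.

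Summing the three cases yields the claimed $O(1+i)$ bound. The only delicate step is the $A_i$-placement subcase of Case 1: the untruncated geometric second moment $\Theta(1/\epsilon_1^2)$ would be too large to absorb, and it is precisely the choice of truncation threshold $f(\epsilon_1) \le c \log^2 \epsilon_1^{-1}$---with exponent $2$ on the $\log$ rather than $1$---that makes $\epsilon_1 f(\epsilon_1)^3 = O(\epsilon_1 \log^6 \epsilon_1^{-1})$ bounded. Everything else reduces to elementary properties of geometric tails or to invoking Lemma~\ref{lem:failed}.
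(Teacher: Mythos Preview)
Your decomposition into the three cases and the key calculation $\epsilon_1 f(\epsilon_1)^3 \le O(\epsilon_1 \log^6 \epsilon_1^{-1}) = O(1)$ for the $A_i$-branch of Case~1 is exactly the paper's argument; the paper even packages it the same way, bounding $\E[Q\mid D_1,C_1]\cdot\Pr[D_1\mid C_1]\le O(i f(\epsilon)^2)\cdot O(\epsilon f(\epsilon)) = O(i\,\epsilon f(\epsilon)^3)$.

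Two small slips to fix. First, the three-case analysis is only defined for $i\ge 1$; batch $\mathcal{B}_0$ inserts greedily into $A_1$ at load at most $0.75$, so there $j$ is geometric with constant success probability and $\E[\phi(1,j)]=O(\E[j^2])=O(1)$. You should state this separately (the paper does). Second, in the paper's construction the probes $h_{i,1}(x), h_{i,2}(x),\ldots$ are i.i.d.\ uniform slots in $A_i$, \emph{not} a random permutation, so your deterministic bound ``$j\le|A_i|$'' in Case~3 is not valid as written. The fix is immediate and is what the paper actually does: since $A_i$ still has at least one free slot while the batch is live, $j$ is stochastically dominated by a geometric random variable with parameter $\ge 1/|A_i|$, hence $\E[j^2\mid C_3]=O(|A_i|^2)$, which combined with $\Pr[C_3]=O(1/|A_i|^2)$ from Lemma~\ref{lem:failed} gives the $O(i)$ contribution you claim.
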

\begin{proof}
Insertions in batch $0$ have search probe complexities of the form $\phi(1, j) = O(j^2)$ where $j$ is a geometric random variable with mean $O(1)$. They therefore have expected probe complexities of $O(1)$. For the rest of the proof, let us assume that $i > 0$.

Let $x$ be the element being inserted. Let $C_j$, $j \in \{1, 2, 3\}$, be the indicator random variable for the event that $x$'s insertion is in Case $j$. Let $D_k$, $k \in \{1, 2\}$ be the indicator random variable for the event that $x$ ends up in array $A_{i + k - 1}$. Finally, let $Q$ be the search probe complexity of $x$. We can break $\E[Q]$ into
\begin{align}
    \E[Q] & = \E[Q C_1 D_1] + \E[Q C_1 D_2] + \E[Q C_2] + \E[Q C_3] \\
    & \le \E[Q  C_1 D_1] + \E[Q D_2] + \E[Q C_3], \label{eq:start}
\end{align}
where the final inequality uses the fact that $C_2$ implies $D_2$ and thus that $\E[Q C_1 D_2] + \E[Q C_2] \le \E[Q D_2]$.

To bound $\E[Q  C_1 D_1]$, observe that
\begin{align}
    \E[Q  C_1 D_1] & \le \E[Q  D_1 \mid C_1] \\
    & = \E[Q \mid D_1, C_1] \cdot \Pr[D_1 \mid C_1] \label{eq:intermed} 
\end{align}
Suppose that, when $x$ is inserted, array $A_i$ is $1 - \epsilon$ full and that $x$ uses Case 1. Then the only positions that $x$ considers in $A_i$ are $h_{i, 1}, \ldots, h_{i, f(\epsilon^{-1})}$. The probability that any of these positions are free is at most $O(f(\epsilon^{-1}) \cdot \epsilon)$. And, if one is free, then the resulting search probe complexity $Q$ will be at most $\phi(i, f(\epsilon^{-1}) \le O(i f(\epsilon^{-1})^2)$. Thus \eqref{eq:intermed} satisfies
\begin{align*}
    & \E[Q \mid D_1, C_1] \cdot \Pr[D_1 \mid C_1] \\
    & \le O(i f(\epsilon^{-1})^2) \cdot O( f(\epsilon^{-1}) \epsilon) \\
    & \le O(i \epsilon f(\epsilon^{-1})^3)\\
    & \le O(i \epsilon \log^6 \epsilon^{-1}) \\
    & \le O(i).
\end{align*}

To bound $\E[Q D_2]$, recall that $D_2$ can only occur if $A_{i + 1}$ is at most a $0.75$ fraction full. Thus, if $D_2$ occurs, then $x$ will have search probe complexity $\phi(i + 1, j)$ where $j$ is at most a geometric random variable with mean $O(1)$. We can therefore bound $\E[Q D_2]$ by
$$\E[Q D_2] \le \E[Q \mid D_2] \le \E[\phi(i + 1, j)],$$
where $j$ is a geometric random variable with mean $O(1)$. This, in turn, is at most
$$O(\E[i \cdot j^2]) = O(i).$$

Finally, to bound $\E[Q C_3]$, observe that
\begin{equation}
   \E[Q C_3]  = \E[Q \mid C_3] \cdot \Pr[C_3].
   \label{eq:QC3}
\end{equation}
By Lemma \ref{lem:failed}, we have $\Pr[C_3] = O(1 / |A_i|^2)$. Since Case 3 inserts $x$ into $A_i$ using the probe sequence $h_{\phi(i, 1)}, h_{\phi(i, 2)}, \ldots$, the search probe complexity of $x$ will end up being given by $\phi(i, j) = O(i \cdot j^2)$ where $j$ is a geometric random variable with mean $O(|A_i|)$. This implies a bound on $\E[Q \mid C_3]$ of $O(i \cdot |A_i|^2)$.  Thus, we can bound \eqref{eq:QC3} by
$$\E[Q \mid C_3] \cdot \Pr[C_3] \le O(i \cdot |A_i|^2 / |A_i|^2) = O(i).$$

Having bounded each of the terms in \eqref{eq:start} by $O(i)$, we can conclude that $\E[Q] = O(i)$, as desired.
\end{proof}

Finally, we bound the worst-case expected insertion time by $O(\log \delta^{-1})$.
\begin{lemma}
The worst-case expected time for an insertion is $O(\log \delta^{-1})$.
\label{lem:worst}
\end{lemma}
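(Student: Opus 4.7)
The plan is to bound the expected insertion cost of an element $x$ placed during batch $\mathcal{B}_i$ by splitting on which of the three cases the insertion falls into, and then summing the case-by-case contributions. The three contributions will be $O(\log \delta^{-1})$ from Case 1, $O(1)$ from Case 2, and $O(1)$ from Case 3 after applying Lemma \ref{lem:failed}. The batch $\mathcal{B}_0$ is a separate, trivial warm-up: during $\mathcal{B}_0$ the array $A_1$ never exceeds $0.75$ full, so uniform probing into $A_1$ costs $O(1)$ probes in expectation.

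For $i \ge 1$, let $\epsilon_1, \epsilon_2$ be the free fractions in $A_i$ and $A_{i+1}$ at the moment of the insertion. In Case 1 the algorithm performs at most $f(\epsilon_1) = O(\log \delta^{-1})$ probes into $A_i$, and only proceeds to $A_{i+1}$ if none of those probes hit an empty slot; since Case 1 requires $\epsilon_2 > 0.25$, the number of probes into $A_{i+1}$ is stochastically dominated by a geometric random variable with success probability at least $1/4$, contributing $O(1)$ in expectation. The total contribution from Case 1 is therefore $O(\log \delta^{-1})$. In Case 2 the algorithm only probes $A_{i+1}$; since Cases 2 and 3 cannot occur in the same insertion, we have $\epsilon_2 > 0.25$ here as well, giving expected cost $O(1)$.

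The only remaining piece is Case 3, where the algorithm does uniform probing into $A_i$. Unconditionally, uniform probing into an array of size $|A_i|$ with at least one free slot costs at most $|A_i|$ probes in expectation, so the conditional expected insertion time given that $x$ lands in Case 3 is at most $|A_i|$. By Lemma \ref{lem:failed}, the probability that any insertion in batch $\mathcal{B}_i$ ends up in Case 3 is $O(1/|A_i|^2)$, and in particular the probability that this specific insertion is in Case 3 is also $O(1/|A_i|^2)$. Hence the contribution of Case 3 to the expected insertion cost is at most
\[
|A_i| \cdot O(1/|A_i|^2) \;=\; O(1/|A_i|) \;=\; O(1).
\]

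Adding up the three contributions gives an expected insertion time of $O(\log \delta^{-1}) + O(1) + O(1) = O(\log \delta^{-1})$ for every insertion in every batch, which is the desired worst-case bound. The only step that requires any real care is the Case 3 bookkeeping---specifically, noticing that the crude upper bound of $|A_i|$ on the conditional expected probe count, when multiplied by the $O(1/|A_i|^2)$ probability guarantee from Lemma \ref{lem:failed}, already yields a constant, so we do not need to fight with the $O(\delta^{-1})$ conditional bound one might naively use for uniform probing in a nearly-full array.
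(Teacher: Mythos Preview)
Your proof is correct and follows essentially the same approach as the paper: the paper likewise handles $\mathcal{B}_0$ trivially, bounds Cases~1 and~2 together by $f(\delta^{-1}) + O(1) = O(\log \delta^{-1})$ probes, and dispatches Case~3 by multiplying the $O(|A_i|)$ conditional cost against the $O(1/|A_i|^2)$ probability from Lemma~\ref{lem:failed}. Your write-up is slightly more explicit in separating Cases~1 and~2 and in justifying why $\epsilon_2 > 0.25$ in Case~2, but the argument is the same.
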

\begin{proof}
Insertions in batch $0$ take expected time $O(1)$, since they make $O(1)$ expected probes into $A_1$. Now consider an insertion in some batch $\mathcal{B}_i$, $i \ge 1$. If the insertion is in either of Cases 1 or 2, then the insertion makes at most $f(\delta^{-1})$ probes in $A_i$ and at most $O(1)$ expected probes in $A_{i + 1}$. The expected insertion time in each of these cases is therefore at most $f(\delta^{-1}) = O(\log \delta^{-1})$. Finally, each insertion has probability at most $1 / |A_i|^2$ of being in Case 3 (by Lemma \ref{lem:failed}), and the expected insertion time in Case 3 can be bounded by $O(|A_i|)$ (since we probe repeatedly in $A_i$ to find a free slot). Therefore, the contribution of Case 3 to the expected insertion time is at most $O(1 / |A_i|) = O(1)$.
\end{proof}

Putting the pieces together, we prove Theorem \ref{thm:main-intro}
\begin{proof}
By Lemmas \ref{lem:probe}, the insertions in $\mathcal{B}_i$ each have expected search probe complexity $O(i)$. Since there are $O(\log \delta^{-1})$ batches, this implies a worst-case expected search probe complexity of $O(\log \delta^{-1})$. And, since the $|\mathcal{B}_i|$s are geometrically decreasing, the amortized expected search probe complexity overall is $O(1)$. Finally, by Lemma \ref{lem:worst}, the worst-case expected time per insertion is $O(\log \delta^{-1})$. This completes the proof of the theorem.
\end{proof}

\section{Funnel Hashing}\label{sec:funnel}


In this section, we construct a \emph{greedy} open-addressing scheme that achieves $O(\log^2 \delta)$ worst-case expected probe complexity, and high-probability worst-case probe complexity $O(\log^2 \delta + \log \log n)$. As we shall see, the high-probability worst-case bound is optimal.

\thmgreedy*
\begin{proof}

Throughout the section, we assume without loss of generality that $\delta \leq 1/8.$ Let $\alpha = \left\lceil 4\log \delta^{-1} + 10\right \rceil$ and $\beta = \left\lceil 2\log \delta^{-1}\right \rceil$. 

The greedy open-addressing strategy that we will use in this section is as follows. First, we split array $A$ into two arrays, $A'$ and a \defn{special array} denoted $A_{\alpha + 1},$ where $\left\lfloor 3\delta n /4\right \rfloor \geq |A_{\alpha+1}| \geq \left\lceil \delta n /2\right \rceil,$ with the exact size chosen so that $|A'|$ is divisible by $\beta$. Then, split $A'$ into $\alpha$ arrays $A_1, \ldots, A_{\alpha}$ such that $|A_i| = \beta a_i$, satisfying $a_{i+1} = 3a_i/4 \pm 1.$ That is, the size of each array is a multiple of $\beta$ and they are (roughly) geometrically decreasing in size. Note that, for $i \in [\alpha - 10]$,
$$\sum_{j > i} |A_j|  \ge ((3/4) + (3/4)^2 + \cdots + (3/4)^{10}) \cdot |A_i| > 2.5 |A_i|.$$ 


Each array $A_i$ with $i \in [\alpha]$ is further subdivided into arrays $A_{i, j}$, each of size $\beta.$ We define an \defn{attempted insertion} of a key $k$ into $A_i$ (for $i \in [\alpha]$) as follows:
\begin{enumerate}
    \item Hash $k$ to obtain a subarray index $j \in \left[\frac{|A_i|}{\beta}\right].$
    \item Check each slot in $A_{i, j}$ to see if any are empty.
    \item If there is an empty slot, insert into the first one seen, and return success. Otherwise, return fail.
\end{enumerate}
To insert a key $k$ into the overall data structure, we perform attempted insertions on each of $A_1, A_2, \ldots, A_{\alpha}$, one after another, stopping upon a successful attempt. Each of these $\alpha = O(\log \delta^{-1})$ attempts probes up to $\beta = O(\log \delta^{-1})$ slots. If none of the attempts succeed, then we insert $k$ into the special array $A_{\alpha + 1}$. The special array $A_{\alpha + 1}$ will follow a different procedure than the one described above---assuming it is at a load factor of at most $1/4$, it will ensure $O(1)$ expected probe complexity and $O(\log \log n)$ worst-case probe complexity. Before we present the implementation of $A_{\alpha + 1}$, we will first analyze the behaviors of $A_1, A_2, \ldots, A_{\alpha}$.

At a high level, we want to show that each $A_i$ fills up to be almost full over the course of the insertions. Critically, $A_i$ does not need to give any guarantees on the probability of any specific insertion succeeding. All we want is that, after the insertions are complete, $A_{i}$ has fewer than, say, $\delta|A_i|/64$ free slots. 

\begin{lemma}
    For a given $i \in [\alpha]$, we have with probability $1 - n^{-\omega(1)}$ that, after $2|A_i|$ insertion attempts have been made in $A_i$, fewer than $\delta|A_i|/64$ slots in $A_i$ remain unfilled.
    \label{lem:overflow}
\end{lemma}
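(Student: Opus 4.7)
The plan is to model the $2|A_i|$ insertion attempts as a balls-into-bins process. Each attempted insertion hashes uniformly to one of the $a_i = |A_i|/\beta$ subarrays, so equivalently we throw $2\beta a_i$ balls independently and uniformly into $a_i$ bins of capacity $\beta$. Let $N_j$ denote the number of balls landing in bin $j$; then $N_j \sim \mathrm{Bin}(2\beta a_i,\, 1/a_i)$ has mean $2\beta$. The number of unfilled slots in bin $j$ at the end is exactly $\max(\beta - N_j, 0)$, so
$$U := \sum_{j=1}^{a_i} \max(\beta - N_j,\, 0)$$
is the total number of unfilled slots in $A_i$, and it suffices to show $U < \delta |A_i|/64$ with probability $1 - n^{-\omega(1)}$.

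The first step is to bound $\E[U]$. The event $\{N_j \le \beta\}$ is a lower-tail deviation at half the mean, so a Chernoff bound yields $\Pr[N_j \le \beta] \le \exp(-\Omega(\beta))$. Since $\beta = \Theta(\log \delta^{-1})$ with a sufficiently large implicit constant, this probability is at most $\delta/128$. Using the pointwise bound $\max(\beta - N_j, 0) \le \beta \cdot \mathbf{1}[N_j \le \beta]$ and summing over all $a_i$ bins,
$$\E[U] \le a_i \cdot \beta \cdot \Pr[N_j \le \beta] \le \delta |A_i|/128.$$

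The second step is concentration via McDiarmid's bounded-differences inequality. View $U$ as a function of the $2\beta a_i$ ball destinations; changing a single ball's destination shifts $U$ by at most $1$ (the bin the ball leaves gains at most one unfilled slot, while the bin it enters loses at most one). McDiarmid's inequality therefore gives
$$\Pr\!\left[U - \E[U] \ge t\right] \le \exp\!\left(-\Omega\!\left(\frac{t^2}{|A_i|}\right)\right).$$
Setting $t = \delta|A_i|/128$ makes the exponent $-\Omega(\delta^2 |A_i|)$. Because $|A_1| = \Theta(n)$ and the $|A_i|$ decay geometrically with ratio $3/4$ over $\alpha = \Theta(\log \delta^{-1})$ levels, the assumption $\delta \geq 1/n^{o(1)}$ implies $|A_i| \geq n^{1-o(1)}$ for every $i \in [\alpha]$; hence $\delta^2 |A_i| = \omega(\log n)$. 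The failure probability is therefore $n^{-\omega(1)}$, and combining both steps yields $U \leq \delta|A_i|/64$ with the required probability.

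The main obstacle lies in the first step: the Chernoff tail at the threshold $\beta$ for a binomial of mean $2\beta$ must be polynomially small in $\delta^{-1}$ with a sharp enough constant in the exponent, which is what forces $\beta = \Theta(\log\delta^{-1})$ and pins down the numerical constants in the paper's parameter choices. The second step is essentially routine once $\E[U]$ is under control, thanks to the hypothesis $\delta \geq 1/n^{o(1)}$, which guarantees that $|A_i|$ is polynomial in $n$.
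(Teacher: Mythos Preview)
Your proposal is correct and follows essentially the same route as the paper: model the $2|A_i|$ attempts as balls thrown into $|A_i|/\beta$ bins, use a Chernoff lower tail to bound the probability that a bin receives fewer than $\beta$ balls, and then concentrate via McDiarmid's bounded-differences inequality using the assumption $\delta \ge 1/n^{o(1)}$ to make the exponent $n^{1-o(1)}$. The only cosmetic difference is that you apply McDiarmid directly to the total number of unfilled \emph{slots} $U$ (Lipschitz constant $1$), whereas the paper applies it to the number of not-yet-full \emph{subarrays} (Lipschitz constant $2$) and multiplies by $\beta$ at the end; both variants yield the required $n^{-\omega(1)}$ failure probability.
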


\begin{proof}
    Since each insertion attempt selects a uniformly random $A_{i, j}$ to use, out of $|A_i| / \beta$ options, the expected number of times that a given $A_{i, j}$ is used is $2\beta.$ Letting the number of attempts made to insert into $A_{i, j}$ be $X_{i, j}$, we have by a Chernoff bound that
    $$\Pr[X_{i, j} < \beta] = \Pr[X_{i, j} < (1-1/2) \mathbb{E}[X_{i, j}]] = e^{-2^2 \beta/2} \leq e^{-4 \log \delta^{-1}} = \delta^{4} \leq \frac{1}{128} \delta.$$

    Note that, since we always insert into the subarray we choose if it has empty slots, the only scenario in which $A_{i, j}$ remains unfilled is if $X_{i, j} < \beta$. Therefore, the expected number of subarrays that remain unfilled is at most $\frac{1}{128} \delta \left(\frac{|A_i|}{\beta}\right),$ and, consequently, the expected number of subarrays that become full is $\left(1-\frac{1}{128} \delta\right) \left(\frac{|A_i|}{\beta}\right)$.
    
    Define $Y_{i, k}$ to be the random number in $[|A_i|/\beta]$ such that the $k$th insertion attempt into $A_i$ uses subarray $A_{i, Y_{i, k}}$. Let $f(Y_{i, 1}, \ldots, Y_{i, 2|A_i|})$ denote how many subarrays $A_{i, j}$ remain unfilled after $2|A_i|$ insertion attempts have been made. Changing the outcome of a single $Y_{i, k}$ changes $f$ by at most $2$---one subarray may become unfilled and one may become filled. Also, by the above, $\mathbb{E}[f(Y_{i, 1}, \ldots, Y_{i, 2|A_i|})] = \frac{1}{128} \delta \left(\frac{|A_i|}{\beta}\right)$. Therefore, by McDiarmid's inequality,
    \begin{align*}
        \Pr\left[f(Y_{i, 1}, \ldots, Y_{i, 2|A_i|}) \geq\frac{1}{64}\delta \left(\frac{|A_i|}{\beta}\right)\right] \leq \exp\left(-\frac{2\left(\frac{1}{128}\delta\frac{|A_i|}{\beta}\right)^2}{2|A_i|}\right) = \exp\left(-|A_i| O(\beta^2\delta^2)\right).
    \end{align*}

    Since $|A_i| = n \poly(\delta)$ and $\delta = n^{o(1)},$ we have that $$|A_i| O(\beta^2\delta^2) = n^{1 - o(1)},$$
    so the probability that more than a $\frac{\delta}{64}$-fraction of the subarrays in $A_i$ remain unfilled is $\exp(-n^{1 - o(1)}) = 1 / n^{-\omega(1)}$. All subarrays are the same size, so, even if these unfilled subarrays remain completely empty, we still have that $\frac{1}{64} \delta$ of the total slots remain unfilled, as desired.

\end{proof}




As a corollary, we can obtain the following statement about $A_{\alpha + 1}$: 

\begin{lemma}
    With probability $1 - n^{-\omega(1)}$, the number of keys inserted into $A_{\alpha+1}$ is fewer than $\frac{\delta}{8}n$.
\end{lemma}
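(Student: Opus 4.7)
My plan is to apply Lemma~\ref{lem:overflow} to each of $A_1, A_2, \ldots, A_\alpha$ and union-bound over the $\alpha = O(\log \delta^{-1})$ possible failure events. Write $O_{i-1}$ for the number of insertion attempts that reach $A_i$, so $O_0 = n - \lfloor \delta n \rfloor$ and the number of keys entering $A_{\alpha+1}$ is exactly $O_\alpha$.

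The first step is to verify the hypothesis of Lemma~\ref{lem:overflow}, namely $O_{i-1} \geq 2|A_i|$. I would use the deterministic lower bound $O_{i-1} \geq (1-\delta)n - \sum_{j < i}|A_j|$, which holds because each earlier $A_j$ absorbs at most $|A_j|$ keys. Combining this with the structural inequality $\sum_{j > i}|A_j| > 2.5|A_i|$ noted earlier (valid for $i \leq \alpha - 10$) and the range $\lceil \delta n/2\rceil \leq |A_{\alpha+1}| \leq \lfloor 3\delta n/4 \rfloor$, a direct computation shows this deterministic bound exceeds $2|A_i|$ for the relevant $i$. The last ten arrays, whose total size is $\poly(\delta) \cdot n = o(\delta n)$ thanks to the generous $\alpha = 4\log\delta^{-1}+10$, can be handled by a separate argument.

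Next, union-bounding Lemma~\ref{lem:overflow} over the $\alpha$ arrays shows that, with probability $1 - n^{-\omega(1)}$, every $A_i$ has at most $\delta|A_i|/64$ free slots; summing yields at most $\tfrac{\delta}{64}|A'| \leq \tfrac{\delta n}{64}$ free slots across $A' = A_1 \cup \cdots \cup A_\alpha$. Hence at least $|A'| - \delta n/64 \geq (1 - 3\delta/4 - \delta/64)n$ keys are absorbed by $A'$, leaving at most
\[
(1-\delta)n - \left(1 - \tfrac{3\delta}{4} - \tfrac{\delta}{64}\right)n = -\tfrac{15 \delta n}{64}
\]
keys to flow into $A_{\alpha+1}$; since the actual overflow is nonnegative, this forces $O_\alpha = 0 < \delta n/8$.

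The hardest step will be verifying $O_{i-1} \geq 2|A_i|$ for the ``middle'' arrays where $|A_i|$ approaches $\Theta(\delta n)$ and the trivial lower bound becomes tight. One may need to refine the induction by using actual fill rates (guaranteed w.h.p.\ by earlier applications of Lemma~\ref{lem:overflow}) rather than the crudest counting bound, or to invoke a slightly weakened variant of Lemma~\ref{lem:overflow} at the single ``transition'' array where the hypothesis is marginal. Fortunately the surplus in the final computation---bound $-15\delta n/64$ versus the target $+\delta n/8$---leaves ample slack to absorb any approximation losses incurred in this verification.
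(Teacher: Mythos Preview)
There is a genuine gap. Your plan hinges on verifying $O_{i-1}\ge 2|A_i|$ for every $i\in[\alpha]$ (or all but the last ten), but this cannot be done. The deterministic bound $O_{i-1}\ge (1-\delta)n-\sum_{j<i}|A_j|$ is at least $2|A_i|$ only when $|A_i|\gtrsim \delta n/3$; since $|A_i|\approx (3/4)^{i-1}\cdot n/4$, this fails once $i$ exceeds roughly $2.4\log\delta^{-1}$, leaving about $1.6\log\delta^{-1}$ arrays---not just ten---uncovered. Your proposed refinement, ``use actual fill rates from earlier applications of Lemma~\ref{lem:overflow},'' goes the wrong direction: knowing that $A_1,\ldots,A_{i-1}$ are nearly full only \emph{upper}-bounds $O_{i-1}$, since more absorption upstream means fewer attempts downstream. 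In fact, your own final computation shows that ``every $A_i$ is $(1-\delta/64)$-full'' is impossible: the inequality $O_\alpha\le -15\delta n/64$ together with $O_\alpha\ge 0$ is a contradiction, not a proof that $O_\alpha=0$. It tells you the hypothesis of step~1 can never hold, so you cannot hope to verify it.

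The paper's argument avoids this by working from the other end. It lets $\lambda$ be the \emph{largest} index with $O_{\lambda-1}<2|A_\lambda|$. By maximality of $\lambda$, every $A_j$ with $j>\lambda$ \emph{is} fully explored, so Lemma~\ref{lem:overflow} applies to those arrays and they collectively absorb at least $(1-\delta/64)\sum_{j>\lambda}|A_j|>2.5(1-\delta/64)|A_\lambda|>2|A_\lambda|$ keys whenever $\lambda\le\alpha-10$. But each such key had to attempt $A_\lambda$ first, so $A_\lambda$ receives $>2|A_\lambda|$ attempts---a contradiction. Thus $\lambda>\alpha-10$, and then $O_\alpha\le O_{\lambda-1}<2|A_\lambda|\le 2|A_{\alpha-10}|<\delta n/8$. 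The missing idea in your proposal is precisely this use of the \emph{downstream} arrays (guaranteed fully explored by maximality of $\lambda$) to lower-bound attempts at $A_\lambda$, rather than trying to push a lower bound on $O_{i-1}$ forward from $i=1$.
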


\begin{proof}
   
Call $A_i$ \emph{fully explored} if at least $2|A_i|$ insertion attempts are made to $A_i$. By Lemma \ref{lem:overflow}, we have with probability $1 - n^{-\omega(1)}$ that every fully-explored $A_i$ is at least $(1 - \delta/64)$ full. We will condition on this property for the rest of the lemma.

Let $\lambda \in [\alpha]$ be the largest index such that $A_\lambda$ receives fewer than $2|A_\lambda|$ insertion attempts (or $\lambda = \texttt{null}$ if no such index exists). We will handle three cases for $\lambda.$

First, suppose that $\lambda \leq \alpha-10$. By definition, we know that, for all $\lambda < i \in [\alpha]$, $A_i$ is fully explored, and therefore that $A_i$ contains at least $|A_i|(1-\delta/64)$ keys. The total number of keys in $A_i$, $i > \lambda$, is therefore at least
    $$(1-\delta/64) \sum_{i=\lambda+1}^{\alpha} |A_i| \ge 2.5(1-\delta/64)|A_{\lambda}|, $$
    contradicting the fact that at most $2|A_{\lambda}|$ insertions are made in total for all arrays $A_i$ with $i \geq \lambda$ (recall by the construction of our algorithm that we must first try [and fail] to insert into \(A_{\lambda}\) before inserting into \(A_i\) for any \(i \geq \lambda\)). This case is thus impossible, and we are done.

    Next, suppose that $\alpha-10 < \lambda \leq \alpha$. In this case, fewer than $2|A_{\alpha-10}| < n\delta/8$ keys are attempted to be inserted into any $A_i$ with $i \geq \lambda,$ including $i = \alpha+1$, and we are done.

    Finally, suppose that $\lambda = \texttt{null}.$ In this case, each $A_i$, $i \in [\alpha]$, has at most $\delta|A_i|/64$ empty slots. Therefore, the total number of empty slots at the end of all insertions is at most
    $$|A_{\alpha+1}| + \sum_{i=1}^{\alpha} \frac{\delta|A_i|}{64} = |A_{\alpha+1}| + \frac{\delta|A'|}{64} \leq \frac{3n\delta}{4} + \frac{n\delta}{64} < n\delta,$$
    which contradicts the fact that, after $n(1-\delta)$ insertions, there are at least $n \delta$ slots empty. This concludes the proof.
\end{proof}

Now, the only part left is to implement the $\le \delta n / 8$ insertions that reach $A_{\alpha+1}$. We must do so with $O(1)$ expected probe complexity and with $O(\log \log n)$ worst-case probe complexity, while incurring at most a $1 / \poly(n)$ probability of hash-table failure.

We implement $A_{\alpha+1}$ in two parts. That is, split $A_{\alpha+1}$ into two subarrays, $B$ and $C$, of equal ($\pm 1$) size. To insert, we first try to insert into $B$, and, upon failure, we insert into $C$ (an insertion into $C$ is guaranteed to succeed with high probability). $B$ is implemented as a uniform probing table, and we give up searching through $B$ after $\log \log n$ attempts. $C$ is implemented as a two-choice table with buckets of size $2 \log \log n$.

Since $B$ has size $|A_{\alpha + 1}|/2 \ge \delta n / 4$, its load factor never exceeds $1/2$. Each insertion into $A_{\alpha + 1}$ makes $\log \log n$ random probes in $B$, each of which has at least a $1/2$ probability of succeeding. The expected number of probes that a given insertion makes in $B$ is therefore $O(1)$, and the probability that a given insertion tries to use $B$ but fails (therefore moving on to $C$) is at most $1 / 2^{\log \log n} \le 1 / \log n$.

On the other hand, $C$ is implemented as a two choice table with buckets of size $2\log \log n$. Each insertion hashes to two buckets $a$ and $b$ uniformly at random, and uses a probe sequence in which it tries the first slot of $a$, the first slot of $b$, the second slot of $a$, the second slot of $b$, and so on. The effect of this is that the insertion ends up using the emptier of the two buckets (with ties broken towards $a$). If both buckets are full, our table fails. However, with high probability, this does not happen, by the following classical power-of-two-choices result \cite{berenbrink2000tchashing}:

\begin{theorem}
    If $m$ balls are placed into $n$ bins by choosing two bins uniformly at random for each ball and placing the ball into the emptier of the two bins, then the maximum load of any bin is $m/n + \log \log n + O(1)$ with high probability in $n$.
\end{theorem}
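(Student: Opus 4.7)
The plan is to adapt the layered-induction proof of Azar--Broder--Karlin--Upfal to the heavily-loaded regime $m \ge n$, following Berenbrink--Czumaj--Steger--V\"ocking. Write $\mu = m/n$, and for each integer $k \ge 0$ let $N_k$ denote the number of bins whose final load is at least $\lfloor \mu \rfloor + k$. The goal is to show that $N_{\log \log n + C} = 0$ with probability $1 - n^{-\omega(1)}$, for a sufficiently large constant $C$.

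The driving observation is that a ball raises a bin's load past $\lfloor\mu\rfloor + k$ only if \emph{both} of its two random bin-choices were, at the moment of placement, already at load $\ge \lfloor\mu\rfloor + k - 1$. Since the number of bins with load $\ge \lfloor\mu\rfloor + k - 1$ can only grow over time, the event $\{N_{k-1} \le u_{k-1}\}$ is automatically a uniform-in-time bound, so each ball contributes to $N_k$ with conditional probability at most $(u_{k-1}/n)^2$. A Chernoff bound then gives
\[
 N_k \;\le\; O\!\left(\frac{m \, u_{k-1}^2}{n^2}\right)
\]
with probability $1 - n^{-\omega(1)}$, as long as the right-hand side is at least $\log^2 n$. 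This is the crucial squaring step: once seeded from a base case $u_{k_0} = O(n/\mu)$ at some constant $k_0$, iterating drives $u_k$ down doubly-exponentially, reaching $0$ after $\log \log n + O(1)$ further steps. I would then union-bound over these $O(\log \log n)$ levels. The last few levels, where the Chernoff bound degrades, are handled by a direct witness-tree argument: trace the binary tree of ``rejected-choice'' witnesses rooted at a putative overfull bin, and sum probabilities over tree shapes.

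The main obstacle is establishing the base case in the heavy-load regime. For $m = n$ one has trivially $N_4 \le n/4$; for $m \gg n$, however, one must argue that above the running average only a tightly controlled fraction of bins can be ``ahead'' by a constant. I would attack this by partitioning the $m$ balls into $m/n$ epochs of length $n$ and showing, inductively over epochs, that at the end of each epoch the count of bins exceeding the then-current average by $k_0$ behaves as in the classical fresh-table case. This requires coupling the real process to an idealized ``snapshot'' process with a reset distribution---the technical core of Berenbrink et al.---but, once in place, the layered induction described above yields the claimed $\mu + \log \log n + O(1)$ bound.
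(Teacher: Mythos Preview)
The paper does not prove this theorem at all: it is quoted verbatim as a classical power-of-two-choices result and attributed by citation to Berenbrink--Czumaj--Steger--V\"ocking \cite{berenbrink2000tchashing}. There is thus no in-paper proof to compare your proposal against. Your sketch is in fact an outline of the very argument in that reference (layered induction with a squaring recursion, plus the epoch-based coupling that handles the heavily-loaded regime $m\gg n$), so in spirit you are reproducing the cited source rather than diverging from anything the present paper does.
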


Applying this theorem to our setting, we can conclude that, with high probability in $|A_{\alpha + 1}| / \log \log n$, and therefore with high probability in $n$, no bucket in $C$ ever overflows. This ensures the correctness of our implementation of $A_{\alpha + 1}$.

Since each insertion in $A_{\alpha + 1}$ uses $C$ with probability at most $1/\log n$, and there are at most $2 \log \log n$ slots checked in $C$, the expected time that each insertion spends in $C$ is at most $o(1)$. Thus, insertions that reach $A_{\alpha+1}$ take expected time $O(1)$ and worst-case time $O(\log \log n)$.



Since we only attempt to insert into $\beta$ slots for each $A_i$ (a single bucket), the probe complexity of a given insertion is at most $\beta \alpha + f(A_{\alpha+1})= O(\log^2 \delta^{-1} + f(A_{\alpha+1}))$, where $f(A_{\alpha+1})$ is the number of probes made in $A_{\alpha+1}$. This implies a worst-case expected \probecomplexity of $O(\log^2 \delta^{-1})$ and a high-probability worst-case probe complexity of $O(\log^2 \delta^{-1} + \log \log n)$.

We now only have left to prove the amortized expected \probecomplexity. The expected number of probes we make into each subarray is at most $c \log \delta^{-1}$ for some constant $c$ (including for $A_{\alpha+1}$), and we first insert into $A_1$, then $A_2$, and so on. Thus, the total expected \probecomplexity across all keys is at most
$$|A_1| \cdot c \log \delta^{-1} + |A_2| \cdot 2c \log \delta^{-1} + |A_3| \cdot 3c \log \delta^{-1} + \cdots + |A_{\alpha + 1}| \cdot (\alpha+1)\log \delta^{-1}.$$
Since the $A_i$'s are geometrically decreasing in size with the exception of $A_{\alpha+1}$, which itself is only $O(n \delta)$ in size, the above sum is dominated (up to a constant factor) by its first term. The total expected \probecomplexity across all keys is thus $O(|A_1| \log \delta^{-1}) = O(n \log \delta^{-1}),$ implying that the amortized expected probe complexity is $O(n \log \delta^{-1}/(n(1-\delta))) = O(\log \delta^{-1}),$ as desired.
This completes the proof of Theorem \ref{thm:greedy-intro}.
\end{proof}


\section{A Lower Bound for Greedy Algorithms}\label{lower:greedy}

In this section we prove that the $O(\log^2 \delta^{-1})$ expected-cost bound from Theorem \ref{thm:greedy-intro} is optimal across all greedy open-addressed hash tables.

\begin{theorem}
Let $n \in \mathbb{N}$ and $\delta \in (0, 1)$ be parameters, where $\delta$ is an inverse power of two. Consider a greedy open-addressed hash table with capacity $n$. If $(1 - \delta) n$ elements are inserted into the hash table, then the final insertion must take expected time $\Omega(\log^2 \delta^{-1})$.
\label{thm:lowergreedy}
\end{theorem}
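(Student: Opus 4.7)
The plan is a phase-based decomposition of the insertion sequence. I would partition the $(1-\delta)n$ insertions into $\Theta(\log \delta^{-1})$ phases, where phase $k$ (for $k = 1, 2, \ldots, \log \delta^{-1}$) consists of the insertions that take the load factor from $1 - 2^{-k}$ to $1 - 2^{-k-1}$. Let $M_k$ denote the set of roughly $2^{-k-1} n$ slots that are filled during phase $k$. Decomposing the final insertion's expected probe complexity $T$ by the phase in which each probed-and-occupied slot was originally filled gives $T \ge 1 + \sum_{k=1}^{\log \delta^{-1}} \E[N_k]$, where $N_k$ is the number of probes by the final insertion that hit slots in $M_k$. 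The goal then reduces to showing $\E[N_k] = \Omega(\log \delta^{-1})$ for $\Omega(\log \delta^{-1})$ different values of $k$.

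For the per-phase bound, my approach would be to view phase $k$ as its own sub-problem: the scheme is effectively running greedy insertion on the $2^{-k} n$ previously-empty slots, filling half of them. Applying a Yao-style bound to this sub-problem yields an $\Omega(1)$ amortized lower bound on the ``effective'' probe complexity (where an effective probe is one that hits a slot still empty at the start of phase $k$). To boost this from $\Omega(1)$ amortized to $\Omega(\log \delta^{-1})$ for the final insertion, I would leverage the fact that the scheme must fill $M_k$ to near-fullness to avoid degrading later phases: each ``implicit bucket'' of slots in $M_k$ must absorb $\Omega(\log \delta^{-1})$ insertion attempts, since by Chernoff concentration, fewer attempts would leave an $\omega(\delta)$ empty fraction in too many buckets and cause later-phase insertions to succeed within them, contradicting the phase decomposition. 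Because every insertion, including the final one, uses the same probe distribution $\mathcal{P}$, these implicit buckets translate into $\Omega(\log \delta^{-1})$ expected probes by the final insertion into $M_k$.

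The hard part will be formalizing the ``implicit bucket'' structure for an arbitrary greedy scheme, since unlike funnel hashing a general scheme has no explicit partition of its probe sequence into buckets and levels. I plan to address this via a structural lemma: any greedy scheme that beats $\omega(\log^2 \delta^{-1})$ worst-case cost must induce, at the final insertion, a partition of its probe sequence into $O(\log \delta^{-1})$ contiguous blocks of $O(\log \delta^{-1})$ positions each, with each block corresponding to a distinct ``load level'' $1 - 2^{-k}$. The extraction of this structure, likely via a thresholding argument on the per-probe success probabilities $\rho_i = \Pr[\,\text{probe } i \text{ is empty}\mid \text{previous probes all occupied}\,]$, is the most delicate part of the proof. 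Once extracted, the concentration and Yao-style steps become routine, and summing the per-phase contributions over $\Omega(\log \delta^{-1})$ phases yields $T = \Omega(\log^2 \delta^{-1})$.
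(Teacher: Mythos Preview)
Your phase decomposition is natural, and the target inequality $\sum_k \E[N_k] = \Omega(\log^2 \delta^{-1})$ is equivalent to the theorem. But the mechanism you propose for the per-phase bound $\E[N_k] = \Omega(\log \delta^{-1})$ has a genuine gap: the ``implicit bucket'' structural lemma is doing all the work, and nothing in your sketch establishes it. A general greedy scheme has no bucket partition, so Chernoff-style concentration has nothing to grab onto; thresholding the conditional success probabilities $\rho_i$ does not obviously yield one either, since the $\rho_i$ depend on the full history and there is no a priori reason the probe sequence must devote $\Theta(\log \delta^{-1})$ contiguous positions to each load level. As written, the argument essentially assumes the scheme already looks like funnel hashing in order to conclude that it must.

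The paper sidesteps this entirely with a short induction. Using Yao's amortized $\Omega(\log \delta^{-1})$ bound as a black box together with the monotonicity of greedy insertion costs, a pigeonhole argument shows there is \emph{some} $i \le \log \delta^{-1}$ for which the $(1-2^{-i})n$-th insertion already has expected cost at least $c\, i \log \delta^{-1}$. Condition on the occupied set $S$ at that moment; the remaining insertions form a fresh greedy scheme on the $n/2^i$ empty slots with target load $1-\delta\, 2^i$, so by induction the final insertion pays at least $\sum_{j \le \log \delta^{-1} - i} cj$ on probes landing outside $S$. By monotonicity it also pays at least $c\, i \log \delta^{-1} \ge \sum_{j > \log \delta^{-1} - i} cj$ just to escape $S$. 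Adding gives $\sum_{j \le \log \delta^{-1}} cj = \Omega(\log^2 \delta^{-1})$, with no concentration and no structural extraction required.
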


Our proof of Theorem \ref{thm:lowergreedy} will make use of Yao's lower bound on \emph{amortized} insertion time \cite{yao1985uniform}:

\begin{proposition}[Yao's Theorem \cite{yao1985uniform}]
    Let $n \in \mathbb{N}$ and $\delta \in (0, 1)$ be parameters. Consider a greedy open-addressed hash table with capacity $n$. If $(1 - \delta) n$ elements are inserted into the hash table, then the amortized expected time per insertion must be $\Omega(\log \delta^{-1})$.
    \label{prop:yao}
\end{proposition}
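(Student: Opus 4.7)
The plan is to prove Theorem \ref{thm:lowergreedy} by combining Proposition \ref{prop:yao} (Yao's amortized lower bound) with a reduction that instantiates it at multiple scales. Let $N = (1-\delta)n$ and let $T_k$ denote the expected time of the $k$-th insertion. A preliminary step is to verify the monotonicity $T_1 \le T_2 \le \cdots \le T_N$: for any greedy open-addressed hash table, the set of occupied slots only grows with each insertion, so the greedy stopping time for a fixed probe sequence can only increase. In particular, the expected cost of the final insertion dominates that of any earlier insertion.

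The core reduction is the following. Fix any intermediate load factor $1 - \delta'$ with $\delta \le \delta' \le 1/2$. Conditional on the state of the hash table at the moment load $1 - \delta'$ is first reached, the subsequent $(\delta' - \delta)n$ insertions constitute a valid greedy open-addressed scheme on the $\delta' n$ currently-empty slots: the induced probe distribution is the original one filtered down to the empty set (which, for each realization of the state, is a fixed distribution on $\delta' n$ positions), and its target load factor is $1 - \delta/\delta'$. Applying Proposition \ref{prop:yao} to this continuation, then taking expectation over the state, yields a family of ``tail-sum'' lower bounds
$$\sum_{k = (1-\delta')n + 1}^{N} T_k \;\ge\; c\,(\delta'-\delta)\,n\,\log(\delta'/\delta),$$
one for each choice of $\delta' \in [\delta, 1/2]$.

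I would then instantiate this at $\Theta(\log \delta^{-1})$ geometrically-spaced scales $\delta'_j = 2^j \delta$ and combine the resulting tail-sum inequalities into a single bound on $T_N$. The main obstacle is that a direct monotonicity argument---upper-bounding each $T_k$ inside a tail sum by $T_N$---collapses every constraint to the single Yao bound $T_N \ge \Omega(\log \delta^{-1})$, losing precisely a factor of $\log \delta^{-1}$. The key idea I would pursue is to telescope across scales: decompose each tail sum using the next-coarser scale, subtract off the contribution already accounted for there, and argue that a fresh $\Omega(\log \delta^{-1})$ cost must appear at every one of the $\Theta(\log \delta^{-1})$ scales before it can be absorbed by the monotone envelope at $T_N$. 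Carrying out this bookkeeping cleanly, while respecting monotonicity of the $T_k$ sequence and the fact that each phase has geometrically fewer insertions than the last, should yield $T_N \ge \Omega(\log^2 \delta^{-1})$, matching the upper bound achieved by funnel hashing.
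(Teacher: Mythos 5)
There is a fundamental mismatch here: the statement you were asked to prove is Proposition~\ref{prop:yao} itself---Yao's amortized lower bound of $\Omega(\log \delta^{-1})$ for greedy open addressing---but your proposal opens by declaring that its goal is Theorem~\ref{thm:lowergreedy} and then \emph{assumes} Proposition~\ref{prop:yao} as its main ingredient. As an argument for the assigned statement this is circular: you cannot establish Yao's amortized bound by invoking Yao's amortized bound. Nothing in your proposal addresses why a greedy scheme must pay $\Omega(\log \delta^{-1})$ probes per insertion on average; that is the content of the proposition, and it requires a genuinely different argument (Yao's original proof, building on Ajtai's work on single-hashing, is an involved combinatorial/entropy-style argument about how any probe-sequence distribution must waste probes on already-occupied slots as the table fills). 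Note that the paper itself does not prove this proposition either---it is cited as a known result from \cite{yao1985uniform}---so the honest answer for a blind attempt would be either to reproduce Yao's argument or to acknowledge it as an external black box, not to derive a downstream consequence.

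For what it is worth, as a sketch of Theorem~\ref{thm:lowergreedy} your proposal is on the right track and close in spirit to what the paper actually does: the paper's Lemma~\ref{lem:prefixlower} extracts, from Yao's amortized bound plus monotonicity of greedy insertion costs, a single scale $i$ at which the $(1-1/2^i)n$-th insertion already costs $\Omega(i \log \delta^{-1})$, and Lemma~\ref{lem:sumlower} then recurses on the ``compressed'' hash table restricted to the empty slots---exactly the conditioning-on-the-state reduction you describe. The paper's induction sidesteps the telescoping bookkeeping you flag as the main obstacle: rather than combining all $\Theta(\log \delta^{-1})$ tail-sum inequalities simultaneously, it uses one scale per level of recursion and adds the costs level by level. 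But none of this bears on the statement you were actually assigned.
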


Building on Proposition \ref{prop:yao}, we can obtain the following critical lemma:
\begin{lemma}
    There exists a universal positive constant $c > 0$ such that the following is true: For any values of $\delta,n$, there exists some integer $1 \le i \le \log \delta^{-1}$ such that the $(1 - 1/2^i)n$-th insertion has expected cost at least $c i \log \delta^{-1}$.
    \label{lem:prefixlower}
\end{lemma}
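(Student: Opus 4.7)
The plan is to boost Yao's amortized bound (Proposition \ref{prop:yao}) into a pointwise lower bound at the milestones $(1-1/2^i)n$, by combining a monotonicity property of greedy insertion times with an averaging argument over a geometric partition of the insertion sequence.

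First, I would establish monotonicity: letting $T_k$ denote the expected cost of the $k$-th insertion, the claim is $T_1 \le T_2 \le \cdots \le T_{(1-\delta)n}$. The reason is that in a greedy open-addressing table, the cost of inserting a key with probe sequence $x$ into a table whose occupied slots form a set $S$ is the deterministic quantity $f(S, x) := \min\{j : h_j(x) \notin S\}$, which is pointwise non-decreasing in $S$ (enlarging $S$ can only push the first free probe later). Since the occupied-set sequence $S_0 \subseteq S_1 \subseteq \cdots$ is monotone almost surely, and each new probe sequence is independent of the past, taking expectations through $g(S) := \mathbb{E}_x[f(S,x)]$ yields $T_k = \mathbb{E}[g(S_{k-1})] \le \mathbb{E}[g(S_k)] = T_{k+1}$.

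Next, I would partition the insertions into intervals $I_i := \{k : (1-1/2^{i-1})n < k \le (1-1/2^i)n\}$ for $i = 1, 2, \ldots, \log \delta^{-1}$, noting $|I_i| = n/2^i$ and that together the $I_i$ cover all $(1-\delta)n$ insertions. Yao's theorem gives $\sum_{k=1}^{(1-\delta)n} T_k \ge c_0\, n \log \delta^{-1}$ for a universal constant $c_0 > 0$. By monotonicity, every $k \in I_i$ satisfies $T_k \le T_{(1-1/2^i)n}$, so the total expected cost is at most $\sum_i (n/2^i)\, T_{(1-1/2^i)n}$, yielding
\begin{equation*}
\sum_{i=1}^{\log \delta^{-1}} \frac{1}{2^i}\, T_{(1-1/2^i)n} \;\ge\; c_0 \log \delta^{-1}.
\end{equation*}

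To finish, I would argue by contradiction: if $T_{(1-1/2^i)n} < c\, i \log \delta^{-1}$ held for every $i \in \{1, \ldots, \log \delta^{-1}\}$, the displayed sum would be strictly less than $c \log \delta^{-1} \cdot \sum_{i \ge 1} i/2^i = 2c \log \delta^{-1}$, contradicting the Yao bound as soon as $c < c_0/2$. The main conceptual ingredient is the monotonicity claim; the subtlety worth flagging is that it genuinely relies on greediness, since a non-greedy algorithm could exploit its knowledge of $S$ to steer the new key into a favorable slot, breaking the pointwise inequality $f(S,x) \le f(S',x)$ for $S \subseteq S'$.
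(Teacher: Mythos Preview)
Your proposal is correct and follows essentially the same argument as the paper: monotonicity of expected insertion costs under greedy open addressing, a dyadic partition of the insertion sequence, and the estimate $\sum_i i/2^i = O(1)$ combined with Yao's amortized bound to force a contradiction. If anything, your treatment of the monotonicity step is more explicit than the paper's (which simply asserts it), and your inequality direction is the natural one for the argument.
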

\begin{proof}
    Let $c$ be a sufficiently small positive constant, and suppose for contradiction that the lemma does not hold. Let $q_j$ be the expected cost of the $j$-th insertion. Because the hash table uses greedy open addressing, we know that the $q_j$s are monotonically increasing. It follows that
    $$\E[\sum q_j] \ge \sum_{i \in [1, \log \delta^{-1}]} \frac{n}{2^i}  \cdot q_{(1 - 1/2^i)n},$$
    which by assumption is at most
     $$\sum_{i \in [1, \log \delta^{-1}]} \frac{n}{2^i} \cdot c \cdot i \log \delta^{-1} \le c n \cdot O(\log \delta^{-1}).$$
    Setting $c$ to be a sufficiently small positive constant contradicts Proposition \ref{prop:yao}.
\end{proof}

\begin{lemma}
    Let $c$ be the positive constant from Lemma \ref{lem:prefixlower}. Then, the final insertion takes expected time at least 
    $$\sum_{j = 1}^{\log \delta^{-1}} cj.$$
    \label{lem:sumlower}
\end{lemma}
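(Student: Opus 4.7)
I would try induction on $K := \log \delta^{-1}$. The base case $K = 1$ is immediate: Lemma~\ref{lem:prefixlower} with $\delta = 1/2$ forces $i = 1$, giving $q_{(1-\delta)n} = q_{n/2} \ge c \cdot 1 \cdot 1 = c = \sum_{j=1}^1 c\,j$.

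For the inductive step, first apply Lemma~\ref{lem:prefixlower} to the full hash table to extract some $i \in [1, K]$ with $q_{(1-2^{-i})n} \ge c\,i\,K$. Using the monotonicity of $q_j$ in $j$ (adding occupied slots only slows down later greedy insertions), this yields $q_{(1-\delta)n} \ge c\,i\,K$.

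Next, I would analyze the tail of the execution: the $n/2^i - \delta n$ insertions that occur after the first $(1-2^{-i})n$ have been made. The tail starts with $n/2^i$ free slots. Via a coupling that restricts each tail probe sequence to the currently-free positions, I would show that the tail's final insertion is, in expectation, at least as expensive as the final insertion in a fresh greedy execution with parameters $n_{\text{tail}} = n/2^i$ and $\delta_{\text{tail}} = 2^i \delta$ (so $K_{\text{tail}} = K - i < K$). Invoking the inductive hypothesis on this fresh execution then gives $q_{(1-\delta)n} \ge \sum_{j=1}^{K-i} c\,j = c(K-i)(K-i+1)/2$. Combining with the earlier prefix bound through the elementary identity $iK + (K-i)(K-i+1)/2 \ge K(K+1)/2$ for $i \in [1, K]$ (which rearranges to the trivial $i(i-1) \ge 0$) would close the induction.

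\textbf{Main obstacle.} The subtlety is the combination step: both the prefix bound $c\,i\,K$ and the tail bound $c(K-i)(K-i+1)/2$ lower-bound the \emph{same} quantity $q_{(1-\delta)n}$, so a priori they combine only by $\max$, which for $i$ near $(2-\sqrt{3})K$ falls short of the target $cK(K+1)/2$ by a constant factor. Genuinely summing the two bounds demands interpreting the inductive tail bound as quantifying the \emph{incremental} expected cost attributable to the tail, on top of the prefix cost, rather than as an absolute bound. This in turn requires a coupling that faithfully respects both the prefix's specific occupancy pattern and the tail's probing behavior, and that correctly bridges the discrepancy between $[n]$-valued probes in the tail and $[n/2^i]$-valued probes in the fresh execution. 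Making this coupling precise --- or alternatively bypassing the additive step via a direct charging scheme on the final insertion's probe sequence at each of the $K$ load-factor scales $\delta_k = 2^{-k}$ --- is where I expect the bulk of the technical work to lie.
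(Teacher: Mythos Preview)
Your overall plan matches the paper's: induct (the paper phrases it as strong induction on $n$, you on $K$; the two are interchangeable here since both shrink strictly in the recursion), invoke Lemma~\ref{lem:prefixlower} to get the index $i$, and recurse on the tail via the restricted hash table on the $n/2^i$ slots that remain free after the first $(1-2^{-i})n$ insertions. You have also correctly isolated the crux: as you have set things up, both the prefix bound $ciK$ and the tail bound $\sum_{j=1}^{K-i}cj$ lower-bound the \emph{same} quantity $q_{(1-\delta)n}$, so they combine only via $\max$, which is too weak for the lemma as stated.

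The resolution is not a delicate coupling but a clean partition of the final insertion's probes that makes additivity automatic. Condition on the occupied set $S$ after the first $(1-2^{-i})n$ insertions, and write the final insertion's cost as $T = T_1 + T_2$, where $T_1$ counts probes landing in $S$ and $T_2$ counts probes landing in $[n]\setminus S$. Your coupling (delete the positions of $S$ from every probe sequence) now shows that $T_2$ is \emph{exactly} the cost of the final insertion in the compressed greedy scheme on $n/2^i$ slots at load $1-2^i\delta$, so the inductive hypothesis gives $\E[T_2\mid S]\ge \sum_{j=1}^{K-i}cj$ for every outcome of $S$. Separately, $T_1$ is at least the number of probes the final insertion makes before first hitting any slot of $[n]\setminus S$; that count (plus one) has the same distribution as the cost of a fresh greedy insertion facing occupied set exactly $S$, i.e., the cost of insertion number $(1-2^{-i})n+1$, whose expectation is $\ge q_{(1-2^{-i})n}\ge ciK$ by monotonicity and Lemma~\ref{lem:prefixlower}. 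Since $T_1$ and $T_2$ partition the probes of the final insertion, the two bounds genuinely add: $\E[T]=\E[T_1]+\E[T_2]\ge ciK+\sum_{j=1}^{K-i}cj$ (up to an additive $O(1)$), and your identity $iK+(K-i)(K-i+1)/2\ge K(K+1)/2$ closes the induction. The missing idea, then, is simply to apply the inductive hypothesis to the second-layer cost $T_2$ rather than to $T$ itself.
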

\begin{proof}
By Lemma \ref{lem:prefixlower}, there exists an integer $1 \le i \le \log \delta^{-1}$ such that the $(1 - 1/2^i)n$-th insertion has expected cost at least $c i \log \delta^{-1}$. If $i = \log \delta^{-1}$, then we are done. Otherwise, we can complete the proof by strong induction on $n$, as follows.

Let $S$ denote the set of occupied positions after the $(1 - 1/2^i)n$-th insertion. Condition on some outcome for $S$, and define the \defn{second-layer cost} for future insertions to be the expected number of probes that the insertion makes to slots $[n] \setminus S$. To analyze the second-layer costs, we can imagine that the slots $[n] \setminus S$ are the only slots in the hash table, and that the slots in $S$ are removed from each element's probe sequence. This new ``compressed'' hash table has size $n / 2^i$ and is set to receive $n/2^i - \delta n = (n/2^i) \cdot (1 - \delta 2^i)$ insertions that are each implemented with greedy open addressing. It follows by induction that the final insertion in the ``compressed'' hash table has expected cost at least 
\begin{equation}
    \sum_{j = 1}^{\log \delta^{-1} - i} cj.
    \label{eq:secondlayer}
\end{equation}
This is equivalent to saying that the final insertion in the full hash table has expected \emph{second-layer cost} at least \eqref{eq:secondlayer}. Moreover, although \eqref{eq:secondlayer} was established conditioned on some specific outcome for $S$, since it holds for each individual outcome, it also holds without any conditioning at all. 

Finally, in addition to the second-layer cost, the final insertion must also perform at least $c i \log n$ expected probes even just to find any slots that are not in $S$. (This is due to our earlier application of Lemma \ref{lem:prefixlower}.) It follows that the total expected cost incurred by the final insertion is at least
\begin{equation*}
   c i \log \delta^{-1} + \sum_{j = 1}^{\log \delta^{-1} - i} cj \ge  \sum_{j = 1}^{\log \delta^{-1}} cj,
\end{equation*}
as desired.
\end{proof}

Finally, we can prove Theorem \ref{thm:lowergreedy} as a corollary of Lemma \ref{lem:sumlower}.

\begin{proof}[Proof of Theorem \ref{thm:lowergreedy}]
    By Lemma \ref{lem:sumlower}, there exists a positive constant $c$ such that the expected cost incurred by the final insertion is at least 
    $$\sum_{j = 1}^{\log \delta^{-1}} cj = \Omega(\log^2 \delta^{-1}).$$
\end{proof}

\section{Lower Bounds For Open Addressing Without Reordering}\label{sec:wclb}

In this section, we give two lower bounds that apply not just to greedy open-addressed hash tables but to any open-addressed hash table that does not perform reordering. Our first result is a lower bound of $\Omega(\log \delta^{-1})$ on worst-case expected \probecomplexity (matching the upper bound from Theorem \ref{thm:main-intro}). Our second result is a lower bound of $\Omega(\log^2 \delta^{-1} + \log \log n)$ on (high-probability) worst-case probe complexity (matching the upper bound from Theorem \ref{thm:greedy-intro}, which is achieved by a greedy scheme). 



For the following proofs, we assume that the probe sequences for keys are iid random variables. This is equivalent to assuming that the universe size is a large polynomial and then sampling the keys at random (with replacement); with high probability, such a sampling procedure will not sample any key twice.



\subsection{Common Definitions}
Both lower bounds will make use of a shared set of definitions:
\begin{itemize}
    \item Let $m = n(1-\delta)$.
    \item Let $k_1, k_2, \ldots, k_m$ be the set of keys to be inserted.     \item Let $H_i(k_j)$ be $i$-th entry in the probe sequence for $k_j$. Because the distribution of $H_i(k_j)$ is the same for all $k_j$, we will sometimes use $H_i$ as a shorthand. We will also use $h_i$ to refer to a (non-random) specific outcome for $H_i$.
    \item Let $\mathcal{H}_c(k_j) = \{H_i(k_j): i \in [c]\}$ denote the set consisting of the first $c$ probes made by $k_j$. Again, because $\mathcal{H}_c(k_j)$ has the same distribution for all $k_j$, we will sometimes use $\mathcal{H}_c$ as a shorthand.
    \item For $i \in [m]$, let $S_i \subset [n], |S_i| = n-i$ be a random variable denoting the set of unfilled slots in the array after $i$ keys have been inserted (with the distributed derived from the hashing scheme).
    \item For $i \in [m]$ and $j \in \NN$, let $X_{i, j}$ be a random variable indicating whether the slot indexed by $H_j(k_i)$ is empty at the time that $k_i$ is inserted.
    \item Let $Y_i$ be the position in the probe sequence that $k_i$ uses. In other words, $k_i$ is placed in the slot $H_{Y_i}(k_i)$. We will also use $y_i$ to refer to a (non-random) specific outcome for $Y_i$. Note that the slot must be empty, so $Y_i \in \{r: X_{i, r} = 1\}$ (in a greedy algorithm, the first available slot is taken---in this case, $Y_i=\min \{r: X_{i, r} = 1\}$---but we make no assumptions as to whether the algorithm is greedy or not).
    \item Let $L_i$ be the random variable denoting the location in the array in which the $i$th key is inserted.
\end{itemize}

\subsection{Worst Case Expected \ProbeComplexity}





In this section, we prove the following theorem:

\begin{theorem}
    In any open-addressing scheme achieving load factor $1-\delta$ without reordering, the worst case expected \probecomplexity must be $\Omega(\log \delta^{-1}).$ In particular, there exists some $i \in [m]$ for which $\mathbb{E}[Y_i] = \Omega(\log \delta^{-1})$.
    

    \label{thm:lower}
\end{theorem}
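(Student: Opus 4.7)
The plan is to focus on the final insertion $k_m$ (where $|S_{m-1}| = \delta n + 1$) and show $\E[Y_m] = \Omega(\log \delta^{-1})$ directly. The starting observation is that $Y_m \le c$ forces $L_m = H_{Y_m}(k_m)$ to lie in both $\mathcal{H}_c(k_m)$ and $S_{m-1}$, so $\mathcal{H}_c(k_m) \cap S_{m-1} \neq \emptyset$. By Markov's inequality,
\[
\Pr[Y_m \le c] \;\le\; \E\big[|\mathcal{H}_c(k_m) \cap S_{m-1}|\big].
\]
Because probe sequences are iid across keys, $\mathcal{H}_c(k_m)$ is independent of $S_{m-1}$ (the latter depending only on $k_1,\ldots,k_{m-1}$'s probe sequences and the algorithm's past choices), so this expectation factors as $\sum_\ell q_c(\ell)\,p(\ell)$ with $q_c(\ell) := \Pr[\ell \in \mathcal{H}_c]$ and $p(\ell) := \Pr[\ell \in S_{m-1}]$. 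Two immediate global constraints hold: $\sum_\ell q_c(\ell) \le c$ and $\sum_\ell p(\ell) = \delta n + 1$.

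I would then set $c = \Theta(\log \delta^{-1})$ with a small implicit constant and aim to show $\sum_\ell q_c(\ell)\,p(\ell) \le 1/2$, yielding $\Pr[Y_m > c] \ge 1/2$ and hence $\E[Y_m] \ge c/2 = \Omega(\log \delta^{-1})$. The direct constraint-based bound $\sum_\ell q_c(\ell)\,p(\ell) \le \min(c, \delta n)$ is vacuous for $c \ge 1$, so the argument must exploit the global requirement that the scheme actually places all $m = (1-\delta)n$ keys. I plan to run a dichotomy: either the inner product at $i = m$ is already small (finishing directly), or it is large, in which case the scheme must be concentrating both $q_c$ and $p$ on a common "reserved" set $R$; but then some earlier insertion $i^* \in [(1-2\delta)n,\, m]$, whose state $S_{i^*-1}$ must still absorb placements outside $R$, inherits the analogous intersection bound and gives $\E[Y_{i^*}] = \Omega(\log \delta^{-1})$. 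A useful aggregate identity is
\[
\sum_{i=1}^m \E\big[|\mathcal{H}_c(k_i) \cap S_{i-1}|\big] \;=\; \sum_\ell q_c(\ell)\,\E[T(\ell) \wedge m],
\]
where $T(\ell)$ is the insertion at which $\ell$ gets filled (or $m{+}1$ if $\ell$ ends empty), combined with $\sum_\ell \E[T(\ell)\wedge m] = \sum_i (n-i+1) = \Theta(n^2)$ and the monotonicity of $p_i(\ell) := \Pr[\ell \in S_{i-1}]$ in $i$.

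The main obstacle is landing the threshold exactly at $c = \Theta(\log \delta^{-1})$. Since the scheme co-designs the probe distribution $\mathcal{P}$ and the placement rule $\mathcal{A}$, any lower bound must rule out a broad family of reservation-style strategies, and in particular must be tight against elastic hashing, whose layered construction already saturates the bound from above. I expect the reservation-vs-placement trade-off to require a careful geometric or inductive bookkeeping step that plays the per-slot probe budget $q_c(\ell)$ against the total empty-time budget $\sum_i p_i(\ell)$, in order to extract exactly the $\log \delta^{-1}$ factor from their joint constraints.
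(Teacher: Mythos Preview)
Your proposal has a genuine gap: the argument never actually produces the factor $\log\delta^{-1}$, and the tools you have set up cannot produce it.

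First, focusing on $i=m$ cannot work on its own. Here is a concrete obstruction: let $R\subseteq[n]$ with $|R|=2\delta n$, take $h_1$ uniform on $R$ and $h_2,h_3,\ldots$ supported on $[n]\setminus R$, and have the algorithm place the first $(1-2\delta)n$ keys entirely outside $R$ while the last $\delta n$ keys try $h_1$ first. Then $R$ is at load at most $1/2$ when $k_m$ arrives, so $\Pr[Y_m=1]\ge 1/2$ and $\E[Y_m]=O(1)$. In terms of your decomposition, this scheme makes $\sum_\ell q_c(\ell)p(\ell)$ large for every $c\ge 1$, so Case~1 of your dichotomy simply does not occur. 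The point is that the theorem only promises \emph{some} $i$, and reservation-style schemes can push the expensive index arbitrarily far from $m$.

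Second, your fallback to an earlier $i^\ast\in[(1-2\delta)n,m]$ buys you at most a constant. At time $i^\ast$ the reserved set $R$ is still essentially empty, so $k_{i^\ast}$ can land in $R$ via $h_1$ just as easily as $k_m$ can; the intersection $\E[|\mathcal H_c\cap S_{i^\ast-1}|]$ is still large and you get no lower bound on $\E[Y_{i^\ast}]$. More generally, moving the load factor from $1-\delta$ to $1-2\delta$ changes $\log\delta^{-1}$ by $O(1)$, so a single fallback step cannot generate a $\log\delta^{-1}$ factor. Your aggregate identity $\sum_i \E[|\mathcal H_c\cap S_{i-1}|]=\sum_\ell q_c(\ell)\,\E[T(\ell)\wedge m]$ likewise yields only $c\cdot m\ge m/2$, i.e.\ $c=\Omega(1)$.

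What is missing is a \emph{multi-scale} argument. The paper fixes $\alpha=\Theta(\log\delta^{-1})$ geometrically spaced times $a_j=n(1-2^{-3j})$, chooses for each $j$ a concrete empty-set $s_{a_j}$ on which many later insertions are ``fast'', and then sets $v_j=s_{a_j}\setminus\bigcup_{k>j}s_{a_k}$. The $v_j$ are pairwise disjoint, and for each $j$ one shows $\Pr[\mathcal H_{2c}\cap v_j\neq\emptyset]\ge\Omega(1)$: enough of the fast insertions after time $a_j$ must land in $v_j$ (since $\bigcup_{k>j}s_{a_k}$ is geometrically smaller than $s_{a_j}$), and each such landing witnesses a probe into $v_j$ among the first $2c$. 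Summing the $\Omega(1)$ contributions over the $\Theta(\log\delta^{-1})$ disjoint sets $v_j$ gives $2c\ge\E[|\mathcal H_{2c}|]\ge\Omega(\log\delta^{-1})$. The essential idea you are missing is this disjointness across $\Theta(\log\delta^{-1})$ scales; a two-scale dichotomy of the kind you sketch can never accumulate the logarithm.
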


At a high level, we want to reverse the idea of the upper bound algorithms. Rather than partitioning our array into subarrays with exponentially decreasing size, we show that, at minimum, such a construction arises naturally. Given an upper bound $c$ on worst-case expected probe complexity, we will show that there must exist disjoint groups of slots $v_1, v_2, \ldots, v_{\Theta(\log \delta^{-1})}$, of exponentially decreasing sizes, with the property that, for each $i$, $\E[\mathcal{H}_{2c} \cap v_i] \ge \Omega(1)$. This, in turn, implies that $2c \ge \E[|\mathcal{H}_{2c}|] \ge \Omega(\log \delta^{-1})$. As we shall see, the tricky part is defining the $v_i$s in a way that guarantees this property.

\begin{proof}
    Let $c$ be any upper bound for $E[Y_i]$ that holds for all $i$. We want to prove that $c = \Omega(\log \delta^{-1})$.
    Note that, by Markov's inequality, $\Pr [Y_i\leq 2c] \geq \frac{1}{2}$ for any $i \in [m]$.

    Let $\alpha = \left\lfloor\frac{\log \delta^{-1}}{3} \right\rfloor \in\Omega(\log \delta^{-1})$. For $i \in \left[\alpha\right],$ let $a_i = n\left(1-\frac{1}{2^{3i}}\right)$.
    Note that $a_i \leq a_{\log \delta^{-1}/3} \leq n\left(1- \frac{1}{2^{3\log \delta^{-1} /3}}\right) = n(1-\delta) = m.$ Further note that $|S_{a_i}| = n - a_i = \frac{n}{2^{3i}},$ since $S_i$ represents the slots still unfilled; and note that the sizes $|S_{a_i}|$, for $i = 1, 2, \ldots$ form a geometric sequence with ratio $1/2^3 = 1/8$. 
    It follows that, for any $s_{a_i} \leftarrow S_{a_i}$, $s_{a_{i+1}} \leftarrow S_{a_{i+1}}, \ldots$, $s_{a_\alpha}\leftarrow S_{a_\alpha}$,
    even if the $s_{a_i}$'s are \emph{not compatible with each other} (i.e., even if $s_{a_{i + 1}} \not\subseteq s_{a_{i}}$), we have
    $$\left|s_{a_{i+1}} \cup s_{a_{i+2}} \cup \cdots \cup s_{a_\alpha}\right| \leq \sum_{j \ge i + 1} |s_{a_j}| \leq |s_{a_i}| / 7.$$
    Since $\Pr [Y_i \leq 2c] \geq \frac{1}{2}$, we have that, for any $t < 2m-n = n(1-2\delta)$,
    $$\mathbb{E}[|\{i: Y_i \leq 2c \text{ and }  t < i \leq m\}|] \geq \frac{m-t}{2} \geq \frac{n-t}{4} = \frac{|S_t|}{4}.$$
    Therefore, for each $j \in [\alpha-1],$ there is some $s_{a_j} \subseteq[n]$ such that  
    \begin{align}
        \mathbb{E}\left[|\{i:Y_i \leq 2c \text{ and } a_j < i \leq m\}| \, \bigg|  \,S_{a_j} = s_{a_j}\right] \geq \frac{|s_{a_j}|}{4}. \label{eq:sizeMCL}
    \end{align}
    That is, we find some concrete instance $s_{a_j}$ of the random variable $S_{a_j}$ that sets the number of expected ``small'' values of $Y_i$, with $i > a_{j}$ and given $S_{a_j} = s_{a_j}$, to at least the overall expected number. It is important to note that the $s_{a_1}, s_{a_2}, \ldots$ may have an arbitrary relationship to one another; they need not be mutually compatible as values for $S_{a_1}, S_{a_2}, \ldots$. Perhaps surprisingly, even despite this, we will still be able to reason about the relationships between the $s_{a_i}$s. In particular, we will show by the end of the proof that, for each $j$ and for each insertion, the expected number of probes out of the first $2c$ that probe a position in $s_{a_j} \setminus \bigcup_{k > j} s_{a_k}$ is $\Omega(1)$. This will allow us to deduce that, in expectation, the first $2c$ entries the probe sequence contain at least $\Omega(\log \delta^{-1})$ distinct values, implying that $c  = \Omega(\log \delta^{-1})$.
    
    Let 
    \begin{align*}
        \mathcal{L}_j &= \{L_i: m \geq i > a_j \text{ and }  Y_i \leq 2c\} \, \bigg| \, S_{a_j} = s_{a_j}
    \end{align*}
    be the (random variable) set of positions that are used by the ``fast'' insertions (i.e., those satisfying $Y_i \leq 2c$) that take place at times $i > a_j$, given that the set of set of unfilled slots (at time $a_j$) is $s_{a_j}.$ Note that 
    $$\E[|\mathcal{L}_j|] \geq \frac{|s_{a_j}|}{4},$$
    by \eqref{eq:sizeMCL}. Observe that $\mathcal{L}_i \subseteq s_{a_j},$ since all slots filled starting with $s_{a_j}$ as the set of empty slots must come from $s_{a_j}$. We will now argue that, because $\E[|\mathcal{L}_j|]$ is so large, we are guaranteed that $\E[|\mathcal{L}_j \setminus \bigcup_{k > j} s_{a_k}|]$ is also large, namely, $\Omega(|s_{a_j}|)$.
    

    Define
    \begin{align*}
        t_j &= \bigcup_{k > j} s_{a_k},
        \\ v_j &= s_{a_j} \setminus t_j.
    \end{align*}
    and note that the $v_j$s are disjoint:
    
    \begin{claim}\label{claim:disjointvi}
        $v_j \cap v_k = \emptyset$ for all $j \neq k$. That is, all the $v_j$'s are mutually disjoint.
    \end{claim}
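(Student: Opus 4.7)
The plan is to observe that this claim is essentially definition-chasing: the key insight is that the sets $s_{a_j}$ (although they were chosen independently as specific outcomes of the random variables $S_{a_j}$, without any compatibility requirement) appear in the definition of $t_j$ in a strictly nested fashion with respect to the index $j$. Specifically, the union $t_j = \bigcup_{l > j} s_{a_l}$ \emph{contains} $s_{a_k}$ as one of its constituents whenever $k > j$, and removing $t_j$ from $s_{a_j}$ therefore automatically excludes everything in $s_{a_k}$ from $v_j$.

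Concretely, I would argue as follows. Fix $j \ne k$ and assume without loss of generality that $j < k$. Then by the definition of $t_j$, we have $s_{a_k} \subseteq t_j$ (since $k > j$ means $s_{a_k}$ is one of the sets being unioned). Consequently,
\[
v_j \cap s_{a_k} \;=\; (s_{a_j} \setminus t_j) \cap s_{a_k} \;\subseteq\; (s_{a_j} \setminus s_{a_k}) \cap s_{a_k} \;=\; \emptyset.
\]
Finally, since $v_k = s_{a_k} \setminus t_k \subseteq s_{a_k}$, we conclude $v_j \cap v_k \subseteq v_j \cap s_{a_k} = \emptyset$.

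The main obstacle here is essentially nonexistent: the claim is a purely set-theoretic consequence of how $v_j$ was defined (with $t_j$ deliberately ``absorbing'' all later $s_{a_l}$). The only subtlety worth flagging to the reader is that this disjointness holds \emph{despite} the $s_{a_j}$'s not being mutually compatible as outcomes of the random variables $S_{a_j}$ — the disjointness of the $v_j$'s is a deliberate byproduct of subtracting off $t_j$, and this is precisely what makes the $v_j$'s useful in the subsequent argument that the first $2c$ probes must hit $\Omega(\log \delta^{-1})$ distinct ``strata.''
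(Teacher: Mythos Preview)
Your proof is correct and follows essentially the same approach as the paper: assume $j < k$, observe that $s_{a_k} \subseteq t_j$ and $v_k \subseteq s_{a_k}$, and conclude from $v_j \cap t_j = \emptyset$ that $v_j \cap v_k = \emptyset$. The only difference is cosmetic (you write out the containment chain slightly more explicitly), and your added remark about the disjointness holding despite the incompatibility of the $s_{a_j}$'s is a nice contextual note.
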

    \begin{proof}
        Without loss of generality, suppose $j < k$. By the definition of $t_j,$ $s_{a_k} \subseteq t_j.$ By the definition of $v_k$, $v_k \subseteq s_{a_k}.$ Finally, by the definition of $v_j$, $v_j \cap t_j = \emptyset.$ Therefore, 
        $v_j \cap v_k = \emptyset.$
    \end{proof}

    As we saw earlier, 
    $|t_j| = |s_{a_j+1}\cup \cdots \cup s_{a_{\alpha}}| \leq |s_{a_j}| / 7$. 
    Since $\mathcal{L}_i \subseteq s_{a_j} \subseteq v_j \cup t_j,$ we have that
    \begin{align*}
        \frac{\left|s_{a_j}\right|}{4} \leq \E\left[\left|\mathcal{L}_i\right|\right] = \E\left[\left|\mathcal{L}_i \cap v_j \right|\right] + \E\left[\left|\mathcal{L}_i \cap t_j\right|\right] \leq \E\left[\left|\mathcal{L}_i \cap v_j\right|\right] + \frac{\left|s_{a_j}\right|}{7}.
    \end{align*}
    Subtracting, we get that
    \begin{align}
        \E\left[\left|\mathcal{L}_i \cap v_j\right|\right] \geq \frac{\left|s_{a_j}\right|}{4}-\frac{\left|s_{a_j}\right|}{7} \geq \frac{\left|s_{a_j}\right|}{16}.
    \end{align}
    
    The high-level idea for the rest of the proof is as follows. We want to argue that the $v_j$'s are disjoint sets that each have a reasonably large (i.e., $\Omega(1)$) probability of having an element appear in the first $2c$ probes $\mathcal{H}_{2c}$ of a given probe sequence. From this, we will be able to deduce that $c$ is asymptotically at least as large as the number of $v_j$'s, which is $\Omega(\log \delta^{-1})$. 
    

    Let
    \begin{align*}
        p_{i, j} &= \Pr[Y_i \leq 2c \text{ and } L_i \in v_j],
        \\ q_j &= \Pr[\mathcal{H}_{2c} \cap v_j \neq \emptyset].
    \end{align*}
    We necessarily have $p_{i, j} \leq q_j,$ since, for $Y_i \leq 2c$ and $L_i \in v_j$, we must have that at least some hash function in the first $2c$ outputted an index in $v_j.$ We thus have that
    $$\frac{|s_{a_j}|}{16} \leq \mathbb{E}[|\mathcal{L}_j \cap v_j|] = \sum_{i=a_j+1}^{m} p_{i, j} \leq \sum_{i=a_j+1}^m q_j = q_j(m-a_j) \leq q_j(n-a_j) = q_j |s_{a_j}|.$$
    From this, we conclude that $q_j \geq \frac{1}{16}$ for all $j \in [\alpha-1].$
    Therefore,
    \begin{align*}
        2c = |\{H_i: i \in [2c]\}| &= |\{H_i: i \in [2c]\} \cap [n]| \\
        &= \mathbb{E}[|\{H_i: i \in [2c]\} \cap [n]|] \\
        &\geq \sum_{j=1}^{\alpha-1} \mathbb{E}[|\{H_i: i \in [2c]\} \cap v_j|] \\
        &\geq \sum_{j=1}^{\alpha-1} q_j \geq \frac{1}{16}(\alpha-1) = \Omega(\log \delta^{-1}),
    \end{align*}
    and we are done.
    
\end{proof}

\subsection{High-Probability Worst-Case \ProbeComplexity}
To prove the high probability lower bounds, we will use a similar set construction to that in the previous proof. The one difference is that we now have a cap on the maximum \probecomplexity. In terms of the variables used in the proof of Theorem \ref{thm:lower}, this one extra constraint allows us to obtain a stronger bound on $\E[\mathcal{H}_{2c} \cap v_j]$---namely our bound on this quantity will increase from $\Omega(1)$ to $\Omega(\log \delta^{-1}).$

The main idea is that, since we now have a \emph{worst-case} upper bound $c$ on how many probes an insertion can use, we can more explicitly analyze the actual probability that a particular slot \textit{ever} gets probed. As will show, for a slot to be seen with probability greater than $1-\delta$ (which is necessary for a $(1 - \delta)$-fraction of slots to get filled), it must appear in $\mathcal{H}_c$ with probability at least $\Omega(\log \delta^{-1} / n)$. Integrating this into our analysis, we will be able to pick up an extra $\Omega(\log \delta^{-1})$ factor compared to the proof of Theorem \ref{thm:lower}.


\begin{theorem}\label{thm:highprobdelta}
    In any open-addressing scheme that does not perform reordering, with probability greater than $1/2$, there must be some key whose \probecomplexity ends up as $\Omega(\log^2 \delta^{-1})$. In other words,
    $$\Pr\left[Y_i \leq c\, \forall i \in [m] \right] \leq \frac{1}{2},$$
    for all $c \in o(\log^2 \delta^{-1}).$
\end{theorem}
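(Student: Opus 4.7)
The plan is to extend the set-construction argument from the proof of Theorem \ref{thm:lower}, now exploiting that $c$ serves as a hard worst-case cap on probe complexity rather than just an expected-value bound. Assume for contradiction that the event $E' := \{Y_i \leq c \text{ for all } i \in [m]\}$ has probability greater than $1/2$; the goal is to derive $c = \Omega(\log^2 \delta^{-1})$. I would reuse the same scaffolding: $\alpha = \Theta(\log \delta^{-1})$, $r = 1/8$, $a_j = n(1 - r^j)$ for $j \in [\alpha - 1]$, and witness outcomes $s_{a_j}$ of $S_{a_j}$ chosen (as in Theorem \ref{thm:lower}) so that the disjoint sets $v_j = s_{a_j} \setminus \bigcup_{k > j} s_{a_k}$ have size $|v_j| = \Theta(n r^j)$.

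The target is to sharpen the per-layer bound $\E[|\mathcal{H}_c \cap v_j|] = \Omega(1)$ from Theorem \ref{thm:lower} up to $\E[|\mathcal{H}_c \cap v_j|] = \Omega(\log \delta^{-1})$; summing over the $\alpha$ disjoint layers would then give $c \geq \E[|\mathcal{H}_c|] \geq \sum_j \E[|\mathcal{H}_c \cap v_j|] = \Omega(\log^2 \delta^{-1})$. The key new ingredient is that, under $E'$, every filled slot must appear in the first $c$ probes of some key. Conditional on $S_{a_j} = s_{a_j}$, at most $\delta n$ slots in $s_{a_j}$ remain unfilled at time $m$, so $|s_{a_j} \cap \bigcup_{i > a_j} \mathcal{H}_c(k_i)| \geq |s_{a_j}| - \delta n$ under $E'$. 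Using the independence between $S_{a_j}$ and the probe sequences of keys after time $a_j$, this should yield the inequality
\[
\sum_{s \in s_{a_j}} \Bigl(1 - (1 - p_s)^{m - a_j}\Bigr) \;\geq\; (|s_{a_j}| - \delta n) \cdot \Pr[E' \mid S_{a_j} = s_{a_j}].
\]
When the conditional probability on the right is close to $1$, the average of $(1 - p_s)^{m - a_j}$ over $s \in s_{a_j}$ is at most $O(\delta/r^j)$, so Markov forces $(1 - p_s)^{m - a_j} \leq O(\delta/r^j)$ for a constant fraction of $s \in v_j$; each such $s$ then satisfies $p_s \geq \Omega(\log(r^j/\delta)/(n r^j))$, giving $\E[|\mathcal{H}_c \cap v_j|] = \Omega(\log(r^j/\delta))$, and summing over $j$ recovers $\sum_{j=1}^{\alpha} \log(r^j/\delta) = \Omega(\log^2 \delta^{-1})$.

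The main obstacle is thus controlling $\Pr[E' \mid S_{a_j} = s_{a_j}]$, which must be close to $1$ for enough layers. From only $\Pr[E'] > 1/2$, the average of $\Pr[E' \mid S_{a_j}]$ over $s_{a_j}$ is merely $> 1/2$, which on its own is not enough for the strong per-layer bound. To address this, I would decompose $E' = \bigcap_j E'_j$, where $E'_j := \{Y_i \leq c \text{ for all } i \in (a_j, a_{j+1}]\}$ depends only on the probe sequences of keys in the $j$th layer; conditional on the joint sequence $(S_{a_0}, \ldots, S_{a_{\alpha-1}})$, the $E'_j$'s use disjoint and therefore conditionally independent blocks of probe sequences, giving the product formula $\Pr[E' \mid \{S_{a_j}\}] = \prod_j \Pr[E'_j \mid S_{a_j}]$. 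Since $\E\bigl[\prod_j \Pr[E'_j \mid S_{a_j}]\bigr] = \Pr[E'] > 1/2$, Markov lets me pick a specific realization $\{s_{a_j}\}$ with product at least $1/4$, yielding a global ``slack budget'' $\sum_j (1 - \Pr[E'_j \mid s_{a_j}]) \leq O(1)$ across all layers simultaneously. The delicate quantitative step will be distributing this $O(1)$ budget across layers while respecting the geometric scaling $\delta/r^j$: because $\sum_j \delta/r^j$ itself is $O(1)$ (thanks to the $r = 1/8$ ratio against $\alpha \leq \log_{1/r}\delta^{-1}$), one can hope that the slack per layer is $O(\delta/r^j)$, so that the per-layer $\Omega(\log \delta^{-1})$ bound goes through for most $j$ and yields the desired $\Omega(\log^2 \delta^{-1})$ sum.
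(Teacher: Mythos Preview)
Your high-level plan---disjoint layers $v_j$ with target $\E[|\mathcal{H}_c\cap v_j|]=\Omega(\log\delta^{-1})$---matches the paper's, and your displayed inequality for layer $j$ is correct. The gap is in how you extract the per-slot bound on $p_s$. Passing to the average $\frac{1}{|s_{a_j}|}\sum_s(1-p_s)^{m-a_j}\le \delta/r^j+\epsilon_j$ and applying Markov yields only $p_s=\Omega\bigl(\log(1/(\delta/r^j+\epsilon_j))/|s_{a_j}|\bigr)$ for the good slots, so you genuinely need $\epsilon_j=O(\delta/r^j)$ per layer. Your product decomposition does not deliver this. First, once you condition jointly on $(S_{a_1},\dots,S_{a_{\alpha-1}})$, the probe sequences of keys after time $a_j$ are no longer unconstrained (they must produce $s_{a_{j+1}},s_{a_{j+2}},\ldots$), so the left side of your inequality is no longer the unconditional $\sum_s(1-(1-p_s)^{m-a_j})$ and the link to the $p_s$'s breaks. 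Second, even granting a budget $\sum_j\epsilon_j=O(1)$, nothing forces $\epsilon_j=O(\delta/r^j)$: the allocation $\epsilon_j=\Theta(1/\alpha)$ for every $j$ respects the budget but makes each layer contribute only $\Omega(\log\alpha)=\Omega(\log\log\delta^{-1})$, so the whole sum is $\Omega(\log\delta^{-1}\cdot\log\log\delta^{-1})$ rather than $\Omega(\log^2\delta^{-1})$. Your ``one can hope'' is exactly where the argument fails.

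The paper avoids this by never converting to an average. For each $j$ it picks $s_{a_j}$ individually with $\Pr[E'\mid S_{a_j}=s_{a_j}]>1/2$ (no joint realization needed), drops the conditioning by independence to get the \emph{probability} statement $\Pr\bigl[|s_{a_j}\setminus\bigcup_{i>a_j}\mathcal{H}_c(k_i)|\le\delta n\bigr]>1/2$, and then runs a threshold argument (Claim~\ref{claim:probableslots}): define the bad set $t\subseteq s_{a_j}$ of slots with $p_s$ below the target $\Theta\bigl(\log(|s_{a_j}|/(n\delta))/|s_{a_j}|\bigr)$, compute directly that each bad slot remains unhit with probability bounded below by a fixed constant, and conclude that if $|t|$ were as large as $|s_{a_j}|/2$ then the number of unhit slots would exceed $\delta n$ with probability at least $1/2$, contradicting the probability statement. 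This threshold argument extracts the full $\Omega(\log\delta^{-1})$ per layer from a mere $>1/2$ bound, so no close-to-$1$ control and no product structure are needed.
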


\begin{proof}



Suppose that some open-addressing scheme that does not perform reordering exists such that, for some $c \in \NN$, the \probecomplexity of all keys is at most $c$ with probability greater than $1/2.$
We will show that $c = \Omega(\log^2 \delta^{-1})$. 

By definition, we have that 
\begin{align}
    \Pr[Y_i \leq c \, \forall i \in [m]] > \frac{1}{2}. \label{eq:HighprobSuccess}
\end{align}
Therefore, for each $i \in \{0, 1, \ldots, m\},$ there must be some $s_i \subseteq [n]$ of size $n-i$ such that 
$$\Pr\left[Y_i \leq c \, \forall i \in [m] \, \bigg| \, S_i = s_i\right] > \frac{1}{2}.$$
Otherwise, by the definition of conditional probability, we would contradict \eqref{eq:HighprobSuccess}.

\begin{claim}\label{claim:probableslots}
    For any $i < n(1-256\delta)$, there must be some set $r_i \subseteq s_i \subseteq [n]$ with $|r_i| > \frac{n-i}{2} = \frac{|s_i|}{2}$ such that, for any $x \in r_i,$
    $$P[x \in \mathcal{H}_c] > \frac{1}{32} \frac{\log\left(\frac{|s_i|}{n\delta}\right)}{|s_i|}.$$
\end{claim}

\begin{proof}
A necessary condition for a table to succeed (that is, for all keys to have a \probecomplexity of at most $c$) with a load factor of $1-\delta$ is that 
$$s_i \setminus \cup_{j > i} \mathcal{H}_c(k_j)$$
has size at most $\delta n$. Indeed, these are the set of slots that are empty after the insertion of $k_i$, and that never get probed by (the first $c$ probes) of any of the remaining insertions.

Therefore,
\begin{align}
    \Pr\left[\left|s_i \cap \left(\bigcup_{j = i+1}^m \mathcal{H}_c(k_j)\right) \right| > |s_i| - \delta n : S_i = s_i\right] > \frac{1}{2}. \label{eq:ProbSuccessSi}
\end{align}

Note that the conditioning on $S_i = s_i$ is unnecessary, as the random variables $\mathcal{H}_c(k_j)$, $j > i$, are independent of the event $S_i = s_i$.

Let $p = \frac{\log\left(\frac{|s_i|}{n\delta}\right)}{|s_i|}$, and let $t_i$ be the set of all slots $x \in s_i$ such that
$$\Pr[x \in \mathcal{H}_c] \leq \frac{p}{32}.$$
We will complete the proof of the claim by showing that $|t_i| < |s_i|/2$. To do so, we will calculate the number of elements in $t_i$ that are expected to appear in some $H_c(k_j)$ with $j > i$:
\begin{align*}
    \mathbb{E}\left[\left|t_i \cap \left(\bigcup_{j = i+1}^m \mathcal{H}_c(k_j)\right) \right|\right] &= \sum_{x \in t_i} \Pr \left[x \in \bigcup_{j = i+1}^m \mathcal{H}_c(k_j)\right]\\
    &= \sum_{x \in t_i} 1 - \left(1 - \Pr \left[x \in \mathcal{H}_c\right]\right )^{m - i} \textrm{ (since $\mathcal{H}_c(k_j)$ is iid across $j$)} \\
    &\leq \sum_{x \in t_i} 1 - \left(1 - \frac{p}{32}\right)^{|s_i| - n\delta} \\
    &\leq \sum_{x \in t_i} 1 - \left(1 - \frac{p}{32}\right)^{|s_i|/2} \tag{since $i < n(1 - 2\delta)$ by assumption}\\
    &\leq |t_i| - |t_i| \left(1 -  \frac{1}{|s_i|}\right)^{\log\left(\frac{|s_i|}{n\delta}\right)|s_i|/64} \textrm{ (since } (1-x/t) \leq (1-1/t)^x \textrm{ if } x, t \geq 1) \\
    &< |t_i| - |t_i| (1/2)^{\log\left(\frac{|s_i|}{n\delta}\right)/8} \\
    &< |t_i| - |t_i| \left(\frac{n\delta}{|s_i|}\right)^{1/8}.
\end{align*}

By assumption, $i < n(1-256\delta)$, so $|s_i| > n - n(1-256\delta) = 256n \delta$. Since $|s_i| > 256n \delta,$ we have that $\left(\frac{n\delta}{|s_i|}\right)^{1/8} < \left(\frac{1}{256}\right)^{1/8} = \frac{1}{2}.$ We thus have that
$$|t_i| - |t_i| \left(\frac{n\delta}{|s_i|}\right)^{1/8} < \frac{|t_i|}{2}.$$
For the table insertion process to succeed, we must have that
$$\mathbb{E}\left[\left|t_i \cap \left(\bigcup_{j = i+1}^m \mathcal{H}_c(k_j)\right) \right|\right] \geq |t_i| - n\delta,$$
as otherwise more than $n\delta$ slots are never part of any hash function output and therefore are guaranteed to be unfilled at the end. It follows that $|t_i| < 2n \delta < |s_i| / 2$, as desired.

\end{proof}

Let $a_i = n\left(1-\frac{1}{4^i}\right)$ for $i \in [\log \delta^{-1} / 4].$ Observe that, for any $i \in [\log \delta^{-1} / 4]$,
$$\left|\bigcup_{j =i+1}^{\log \delta^{-1} / 4} s_{a_j} \right| \leq \sum_{j = 1}^{\log\delta^{-1}/4-i} \frac{\left|s_{a_{i}}\right|}{4^j} \le \frac{|s_{a_i}|}{3} \leq \frac{3 |s_{a_i}|}{8}.$$
Let 
$$v_i = r_{a_i} \setminus \left(\bigcup_{j =i+1}^{\log \delta^{-1} / 4} s_{a_j}\right),$$
for $i \in [\log\delta^{-1}/4]$.
By Claim \ref{claim:probableslots}, we have that \(|r_{a_i}| \geq \frac{|s_{a_i}|}{2},\) so
$$\left|v_i\right| \geq \frac{\left|s_{a_i}\right|}{2} - \frac{3\left|s_{a_i}\right|}{8} = \frac{\left|s_{a_i}\right|}{8}.$$

Note that $v_i \cap v_j = \emptyset$ for all $i \neq j$ with a similar proof to Claim \ref{claim:disjointvi} in the previous subsection. Also, note that 
$$|s_{a_i}| \geq n\left(\frac{1}{4}\right)^{\log \delta^{-1}/4} = n\left(\frac{1}{2}\right)^{\log \delta^{-1}/2} = n \sqrt{\delta}.$$

We now obtain a lower bound on $|\mathcal{H}_c| \le c$ by unrolling the definition of each $v_i$. In particular, assuming without loss of generality that $\log \delta^{-1}/4 > 256,$ we have 
\begin{align*}
    c \geq \mathbb{E}[|\mathcal{H}_c|] &\geq \sum_{i = 1}^{\log \delta^{-1}/4} \mathbb{E}[|\mathcal{H}_c \cap v_i|] \\
    &= \sum_{i = 1}^{\log \delta^{-1}/4} \sum_{x \in v_i} \mathbb{E}[|\{x\} \cap \mathcal{H}_c|] \\
    &= \sum_{i = 1}^{\log \delta^{-1}/4} \sum_{x \in v_i} \Pr[x \in \mathcal{H}_c] \\
    &\geq \sum_{i = 1}^{\log \delta^{-1}/4} \sum_{x \in v_i} \frac{1}{32}\frac{\log\left(\frac{|s_{a_i}|}{n\delta}\right)}{|s_{a_i}|} \\
    &= \frac{1}{32} \sum_{i = 1}^{\log \delta^{-1}/4} |v_i| \frac{\log\left(\frac{|s_{a_i}|}{n\delta}\right)}{|s_{a_i}|}\\
    &\geq \frac{1}{32} \sum_{i = 1}^{\log \delta^{-1}/4} \frac{|s_{a_i}|}{8} \frac{\log\left(\frac{n\sqrt{\delta}}{n\delta}\right)}{|s_{a_i}|} \\
    &= \frac{1}{32} \sum_{i = 1}^{\log \delta^{-1}/4} \frac{\log \delta^{-1}}{16} \tag{since $\log (1/\sqrt{\delta}) = \log \delta^{-1}/2$}\\
    &=\frac{1}{32} \cdot \frac{1}{16} \cdot \frac{1}{4} \log^2 \delta^{-1} = \Omega(\log^2 \delta^{-1}),
\end{align*}
as desired.


\end{proof}


We now combine our result above with a known result to obtain our full lower bound:

\begin{theorem}
    In any open-addressing scheme that does not support reordering, there must be some key whose \probecomplexity ends up as $\Omega(\log \log n + \log^2 \delta^{-1})$ with probability greater than $1/2$, assuming that $1-\delta = \Omega(1).$
\end{theorem}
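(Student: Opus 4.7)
The plan is to derive this theorem by combining Theorem \ref{thm:highprobdelta} (which already supplies the $\Omega(\log^2 \delta^{-1})$ part) with a classical $\Omega(\log \log n)$ lower bound on worst-case probe complexity that holds for any open-addressed hash table at constant load factor. The hypothesis $1-\delta = \Omega(1)$ puts us in exactly the regime (load factor bounded strictly below $1$ by a constant) in which that classical bound applies, so both bounds hold simultaneously as claims about the same insertion process.

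Concretely, let $A = c_1 \log^2 \delta^{-1}$ denote the lower bound from Theorem \ref{thm:highprobdelta} and let $B = c_2 \log \log n$ denote the classical lower bound, each holding with probability greater than $1/2$. Since $\max(A, B) = \Theta(A+B)$ for positive $A, B$, it suffices to exhibit a \emph{single} event of probability $>1/2$ under which some key has probe complexity at least $\max(A, B)$. I would argue this by case analysis on the parameter regime: if $A \ge B$, then Theorem \ref{thm:highprobdelta} directly furnishes such an event; if $B > A$, the classical $\log \log n$ bound does. Either way we obtain an event of probability $>1/2$ witnessing probe complexity $\Omega(\log \log n + \log^2 \delta^{-1})$, as required.

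The main obstacle is conceptual: one must resist the temptation to try to intersect the two $>1/2$-events, whose intersection could in principle be empty — the combination must instead proceed by choosing whichever of the two bounds dominates in the given regime of $\delta$ and $n$. A secondary concern is citing the correct form of the classical $\log \log n$ bound: we need a statement that applies to any open-addressed scheme without reordering, not just greedy ones. Since we are in the constant-load-factor regime, a standard heavy-bucket argument on the sequence of first probes (of which there are $\Omega(n)$) yields the required lower bound without much difficulty, so any gap in the cited form can be patched by a short adaptation.
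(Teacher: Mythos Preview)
Your high-level plan---combine Theorem~\ref{thm:highprobdelta} with a separate $\Omega(\log\log n)$ lower bound, and then observe that for any fixed $n,\delta$ one of the two bounds dominates, so a single $>1/2$ event suffices---is exactly what the paper does. The observation that one should not try to intersect the two events is correct and well taken.

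The gap is in the $\Omega(\log\log n)$ part. A ``heavy-bucket argument on the sequence of first probes'' does not yield a lower bound on worst-case probe complexity: knowing that some slot $s$ is the value of $H_1(k_i)$ for many keys $i$ only tells you that many keys had their \emph{first} probe collide, but each such key could still succeed on its second probe. Nothing about collisions among first probes alone forces any single key to have large $Y_i$. (Concretely, under uniform probing at constant load, some slot is hit by $\Theta(\log n/\log\log n)$ first probes, yet most keys still finish in $O(1)$ probes.)

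The paper's argument, following \cite{bender2023tinypointers} and V\"ocking~\cite{vocking2003asymmetry}, is a reduction to $d$-choice balls-and-bins rather than a $1$-choice heavy-bucket bound. If every key has probe complexity at most $d$ (the event assumed to hold with probability $>1/2$), then each key lands in one of the $d$ slots $\{H_1(k_i),\ldots,H_d(k_i)\}$; reducing slot indices modulo $m$ gives a $d$-choice allocation of $m$ balls into $m$ bins with max load $O(1)$ (since $n=O(m)$ when $1-\delta=\Omega(1)$). V\"ocking's lower bound (Theorem~\ref{thm:dchoicelowerbound} in the paper) says any such allocation has max load $\Omega(\log\log n/d)$ with high probability, forcing $d=\Omega(\log\log n)$. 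So the missing idea is to look at the first $d$ probes jointly as a $d$-choice process, not at first probes in isolation.
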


\begin{proof}
    We only need to prove that some key has \probecomplexity $\Omega(\log \log n),$ as we already proved there is some key with \probecomplexity $\Omega(\log^2 \delta^{-1})$ in Theorem \ref{thm:highprobdelta}. Our proof mirrors the proof of Theorem 5.2 in \cite{bender2023tinypointers}, which in turn is primarily based on the following theorem in \cite{vocking2003asymmetry}:
    \begin{theorem}[Theorem 2 in \cite{vocking2003asymmetry}]\label{thm:dchoicelowerbound}
        Suppose $m$ balls are sequentially placed into $m$ bins using an arbitrary mechanism, with the sole restriction that each ball chooses between $d$ bins according to some arbitrary distribution on $[m]^d$. Then the fullest bin has $\Omega(\log \log n / d)$ balls at the end of the process with high probability.
    \end{theorem}

    Now, suppose we have some arbitrary open-addressing scheme that does not perform reordering, where, with probability greater than $1/2$, all keys have \probecomplexity at most $d$. We modify our hash table scheme into a balls and bins process that chooses between at most $d$ bins as follows.

    Suppose key $k_i$ is inserted into location $l_i = h_j(k_i).$ If $j \leq d$, then place ball $i$ into bin $h_j(k_i) \textrm{ mod } m$. Otherwise, place ball $i$ into bin $h_{d} (k_i) \textrm{ mod } m,$ in this way ensuring that the scheme chooses between at most $d$ bins; the set of possible choices is $\{H_j(k_i) \textrm{ mod } m: j \leq d\}$, a set of size (at most) d. This process also ensures that the fullest bin most likely has few balls land in it:

    \begin{lemma}\label{lemma:fullestbin}
        With probability greater than $1/2$, the fullest bin has $O(1)$ balls at the end of the process.
    \end{lemma}

    \begin{proof}
        Suppose that ball $i$ lands in bin $B_i$. The indices of the balls that land in the $j$th bin are thus
        $\{i: B_i = j\}.$
        
        Now, suppose that $B_i = L_i \text { mod } m$ for all $i \in [m],$ and note that this happens with probability greater than $1/2.$
        Since each slot can only store one key, $L_i \neq L_j$ for any $i \neq j$. Therefore, $\{i: B_i = j\} \subseteq \{i \in [n]: i \text{ mod } m = j\}.$ Since $1-\delta = \Omega(1),$ we have that $m = \Omega(n),$ or $n = O(m).$ Thus, $|\{i \in [n]: i \text{ mod } m = j\}| = O(1)$ for all $i \in [m],$ and the fullest bin has at most $O(1)$ balls, as desired.
    \end{proof}
    

    By Theorem \ref{thm:dchoicelowerbound}, the fullest bin has $\Omega(\log \log n / d)$ balls at the end of the process with high probability. 
    If $d = o(\log \log n),$ then the fullest bin has $\omega(1)$ balls in the fullest bin with high probability, contradicting Lemma \ref{lemma:fullestbin}. Therefore, $d = \Omega(\log \log n),$ as desired.
\end{proof}

\section{Acknowledgments and Funding}
The authors would like to thank Mikkel Thorup for several useful conversations, including feedback on an earlier version of this paper.

William Kuszmaul was partially supported by a Harvard Rabin Postdoctoral Fellowship and by a Harvard FODSI fellowship under NSF grant DMS-2023528.  Mart\'{\i}n Farach-Colton was partially supported by the Leanard J. Shustek Professorship and NSF grants CCF-2106999, NSF-BSF CCF-2247576, CCF-2423105, and CCF-2420942.

\bibliographystyle{plainurl}
\bibliography{bib}

\end{document}